\newtheorem{theorem}{Theorem}[section]
\newtheorem{definition}[theorem]{Definition}
\newtheorem{corollary}[theorem]{Corollary}
\newtheorem{lemma}[theorem]{Lemma}
\newtheorem{remark}[theorem]{Remark}
\newcommand{\ed}{\mathrm{d}}
\definecolor{darkgreen}{RGB}{0,127,0}
\renewcommand{\P}{{\mathbb P}}
\newcommand{\Q}{{\mathbb Q}}
\newcommand{\E}{{\mathbb E}}
\newcommand{\M}{{{\cal M}_1}({\mathbb R})}
\newcommand{\R}{{\mathbb R}}
\newcommand{\maximizeterm}{\sup}
\title{Optimizing S-shaped utility and implications for risk management\thanks{We are grateful to Doctor Dirk Tasche for helpful suggestions and discussion on the first version. We thank Professor Xunyu Zhou who, during our seminar at Columbia University in NY on November 29 2017, provided us with a number of references we had missed in our literarure review and that have been included in the introduction of this updated version.}}
\author{John Armstrong\\ Dept. of Mathematics \\ King's College London \\ {\tt \small john.1.armstrong@kcl.ac.uk} \and Damiano Brigo \\ Dept. of Mathematics \\ Imperial College London \\ {\tt \small damiano.brigo@imperial.ac.uk}}
\date{{\small First published on arXiv.org on Nov 3, 2017, arXiv 1711.00443. This version: \today}}
\begin{document}

\maketitle

\vspace{-0.5cm}

\begin{abstract}
We consider market players with tail-risk-seeking behaviour as exemplified by the S-shaped utility introduced by Kahneman and Tversky. We argue that risk measures such as value at risk (VaR) and expected shortfall (ES)  are ineffective in constraining such players.  We show that, in many standard market models, product design aimed at utility maximization is not constrained at all by VaR or ES bounds: the maximized utility corresponding to the optimal payoff is the same with or without ES constraints. By contrast we show that, in reasonable markets, risk management constraints based on a second more conventional concave utility function can reduce the maximum S-shaped utility that can be achieved by the investor, even if the constraining utility function is only rather modestly concave. It follows that product designs leading to unbounded S-shaped utilities will lead to unbounded negative expected constraining utilities when measured with such conventional utility functions. To prove these latter results we solve a general problem of optimizing an investor expected  utility under risk management constraints where both investor and risk manager have conventional concave utility functions, but the investor has limited liability. We illustrate our results throughout with the example of the Black--Scholes option market.
These results are particularly important given the historical role of VaR and that ES was endorsed by the Basel committee in 2012--2013.
\end{abstract}

%

{\small {\bf Keywords and phrases}: Optimal product design under risk constraints; value at risk constraints; expected shortfall constraints; concave utility constraints; S-shaped utility maximization;  limited liability investors; tail risk seeking investors; effective risk constraints; concave utility risk constraints.}

\medskip

{\small \noindent {\bf AMS} classification codes: 91B16, 91B28, 91B30, 91B70; {\bf JEL}: D81, G11, G13.

\newpage

\section{Introduction}

In this paper we consider market players with tail-risk-seeking behaviour as exemplified by the S-shaped utility introduced by Kahneman and Tversky \cite{kahnemanAndTversky}. We argue that risk measures such as value at risk (VaR) and
 expected shortfall (ES) (also known as conditional value at risk (CVaR) or more rarely as average value at risk (AVaR)) are ineffective in constraining such players. To illustrate this we show that, in many familiar market models, product design aimed at utility maximization for tail-risk-seeking market players is unaffected by ES bounds. Given that for a fixed confidence level ES dominates VaR, the analysis under VaR constraints is completely analogous and easier, so in the paper we focus on ES.  To show this ineffective behavior, we prove that the maximized utility corresponding to the optimal payoff in the optimization problem is the same with or without an ES constraint. This is particularly important in the light of the fact that ES has been officially endorsed and suggested as a risk measure by the Basel committee in 2012-2013 \cite{basel2013,Basel_Minimum_Capital_MR}, partly for its ``coherent risk measure'' properties \cite{artznerEtAl,acerbitasche}. 

We are not the first to criticize VaR and ES. A full literature review on criticism of VaR and ES is beyond the scope of this introduction. We recall the above-mentioned works \cite{artznerEtAl,acerbitasche}, where VaR is criticized for its lack of coherence and sub-additivity more in particular. However, ES has not been immune from criticism either.  \cite{danielsson}  for example argues that the reliability of any backtesting procedure for ES is much lower than that of VaR. The issue of backtestability of ES has been discussed via the elicitability debate started in \cite{gneiting} in recent years, although \cite{acerbi} clarified several issues on the matter, arguing that ES is backtestable, see also \cite{tasche}  and \cite{ziegel} where it is proven that the pair formed by VaR and ES is jointly elicitable. The paper \cite{davis}, working under the framework of prequential statistics, argues that VaR has important advantages over ES in terms of verification properties. The debate is continuing to this day.

However, our criticism here does not center on a choice between VaR and ES and is of a different, more fundamental nature. Our criticism links directly with the use of risk measures as an excessive risk-taking control tool, where we show that VaR and ES clearly fail in a quite general market model. We further illustrate this failure by specializing our result to a Black--Scholes option market setting, given that the Black--Scholes market is a typical benchmark case.  

The first part of the paper is thus a negative result on the use of expected shortfall for curbing excessive tail-risk-seeking behaviour. The natural question is which alternatives could work? 

In the second part of the paper we introduce a possible solution. We calculate the solution of the payoff-design optimization problem with expected utility constraints replacing ES constraints. In this case two utilities will be involved: the investor utility to be maximized, and the risk manager's concave utility that will constrain the strategy. We are able to calculate the optimal strategy in a special case corresponding to a conventionally risk-averse investor with limited liability.  We will see that in this case risk constraints are effective in curbing excessive tail-risk-seeking behaviour under quite modest conditions on the market and the risk manager's utility function.

Note that it follows from the last result that in order to achieve unboundedly positive expected utilities measured with the investor's S-shaped utility function, the expected utilities measured using the risk manager's concave utility function must be unboundedly negative.

We now present a literature review of earlier related work\footnote{Following a presentation of the first version of this paper  (published online on November 3, 2017) at Columbia University in New York on November 29, 2017, we learned that extensive work had been done previously on related themes that partly overlaps with this paper.}. 

Utility optimization under risk measure constraints was considered earlier in \cite{basak}, who adopt a framework that is exactly the one we adopt in this paper, 
but only under standard utility assumptions and no S-shaped utility in particular. In that paper it is shown that in the case where a large loss occurs, it is an even larger loss under value at risk based risk management. This occurs because when the constraint is binding, a market player who is forced by the VaR constraint to reduce portfolio losses in some states would finance these reduced losses by increasing portfolio losses in the costly states where the terminal state price density is large. As such states already have the lowest terminal portfolio value for the unconstrained problem, the VaR constraint ends up fattening the left tail of the terminal portfolio distribution. This leads to increased probability of extreme losses. 

In \cite{cuoco} it is shown that VaR constraints play a better role when, as is done in practice,  the portfolio VaR is re-evaluated dynamically by incorporating available conditioning information.  Again, this is done under standard utility and S-shaped utility is not considered. 

Prospect theory has been studied in relation to risk measures and portfolio choice in a series of papers by Xunyu Zhou and co-authors. We will consider primarily the four papers \cite{xyz1,xyz3,xyz2,xyz4}. Related research is presented in 
\cite{xyz5,xyz6,xyz7,xyz8,xyz9,xyz10}. 

The first paper we consider here is the 2008 paper \cite{xyz1}, where optimal portfolio choice under S-shaped utility and probability distorsions is considered and solved. This is done without a risk constraint, though, but essentially under a budget constraint and at times a no-bankruptcy constraint, meaning a positive terminal value. Techniques include essentially what we call $X$-rearrangements here in our paper and connections with the classic theory of re-arrangements and inequalities by Hardy and Littlewood is done explicitly in the  2010 paper \cite{xyz3}.

The 2011 paper \cite{xyz2} generalizes and abstracts the approach in \cite{xyz1}, addressing a variety of models, leading to the ``quantile formulation''. It solves a general problem of utility maximization, including S-shaped utility with probability distorsion and under a budget constraint but no risk constraint. This 2011 paper uses law invariance as we do here and adopts similar assumptions and techniques. In particular, it is found that the optimal terminal payment is anti-comonotonic with the pricing kernel. If one ignores the risk management constraint, the 2011 paper effectively contains a proof of our Theorem  \ref{theorem:rearrangement} below in a more general setting.\footnote{We had developed our proof independently and found out later that a similar proof for the problem without risk constraint had been published earlier in \cite{xyz2} in a more general setting.} 
However, we find that although the techniques and the discussion in that paper anticipate many of the techniques and ideas we use here, again portfolio choice based on S-shaped utility maximization under a value at risk or expected shortfall constraint is not addressed, as further confirmed by the five motivating models in \cite{xyz2}. 

The 2015 paper \cite{xyz4} considers the problem of miminizing a risk measure (typically generalizations of weighted Value at Risk, WVaR) under a minimum performance constraint, or on optimizing mean risk (as opposed to expected utility). There are clear connections with our paper. The closest we get, it seems to us, is in Section 7.2 of \cite{xyz4} where VaR constraints are used but expected utility is not considered as the objective to be maximized, as the expected return or mean risk is maximized instead. We note that Section 7.2 of \cite{xyz4} is presented as a contribution to the debate on portfolio choice under risk constraints as for example in the above-mentioned earlier paper \cite{cuoco}, although while \cite{cuoco} does consider expected (but not S-shaped) utility, \cite{xyz4} does not do so in Section 7.2. 

We remain confident that optimal portfolio choice under s-shaped utility and VaR/ES constraints or concave utility risk constraints  remains therefore an original problem, although the above papers represent a fundamental contribution to behavioural finance,  prospect theory, portfolio choice and risk measures, anticipating part of our results and techniques.

The paper is organized as follows. In Section \ref{sec:Sshaped} we introduce S-shaped utility and define the notion of a tail-risk-seeking market player. We also point out that utility has to be measured on the right variables, like the pay-packet of a trader for example rather than simply his portfolio P\&L, and that in such cases S-shaped utility does not necessarily denote irrational behaviour. 

In Section \ref{sec:lawinvariant} we introduce the optimization problem that will be used in the paper. We find a given market optimal payoff design (simple claim) that 
maximizes a function of the claim distribution under a price constraint and a risk measurement constraint. The risk management constraint is based on the claim probability distribution and could be for example a limit on VaR or ES of the claim position.  
The solution of this problem is given in a theorem, proved in Appendix \ref{appendix:rearrangementProof} where we use a notion of rearrangement similar to that used in the Hardy and Littlewood inequality for symmetric decreasing rearrangements, see also the earlier works \cite{dybvig} and \cite{xyz3}\footnote{While we have developed the connection with the theory of Hardy and Littlewood independently, we found out later that this had been realized earlier in \cite{xyz3}, see also \cite{dybvig}.}. Incidentally, the fact that we have both prices and risk measures leads us to model explicitly also the Radon--Nykodim derivative linking the pricing measure and the physical measure. 

In Section \ref{sec:optimiz} we apply our optimization result to payoff design optimization under S-shaped utility and expected shortfall risk constraints. This is the main negative result of the paper, where we show that the expected shortfall constraint is irrelevant in that the problem has the same optimal expected utility with or without it. 

In Section \ref{section:blackScholesDiscrete} we apply our result in the Black Scholes market, showing the specific calculations and entities that arise in that setting. 

Section \ref{sec:limlam} starts looking at effective solutions for curbing excessive risk-seeking behaviour that do not suffer from the ES problems. We see that an effective constraint can be based on putting bounds on expected concave utilities. In this case the risk constraint, based on the expected concave utility for losses, impacts the optimal payoff design. The objective function is still based on the expected utility corresponding to gains. We still work under an additional price/budget constraint. We also provide a two step algorithm that allows one to compute the optimal solution via line search. We conclude by illustrating calculations for a specific parametric choice of the utility function and market and show that for strategies that attain infinite S-shaped investor expected utility, the risk expected concave utility constraint is infinitely bad.

\section{S-shaped utility and tail-risk-seeking behaviour}\label{sec:Sshaped}

In \cite{kahnemanAndTversky}, Kahneman and Tversky observed that individuals
appear to have preferences governed by an S-shaped utility function. By ``S-shaped'' utility Kahneman and Tversky mean a number of things.
\begin{enumerate}[(i)]
\item  They are increasing
\item  They are strictly convex on the left
\item  They are strictly concave on the right
\item  They are non-differentiable at the origin
\item  They are asymmetrical: negative events are considered worse than positive events are considered good.
\end{enumerate}
A typical S-shaped utility function is shown in Figure \ref{fig:sshapedcurve}.
\begin{figure}[htpb]
\centering
\includegraphics[width=0.7\linewidth]{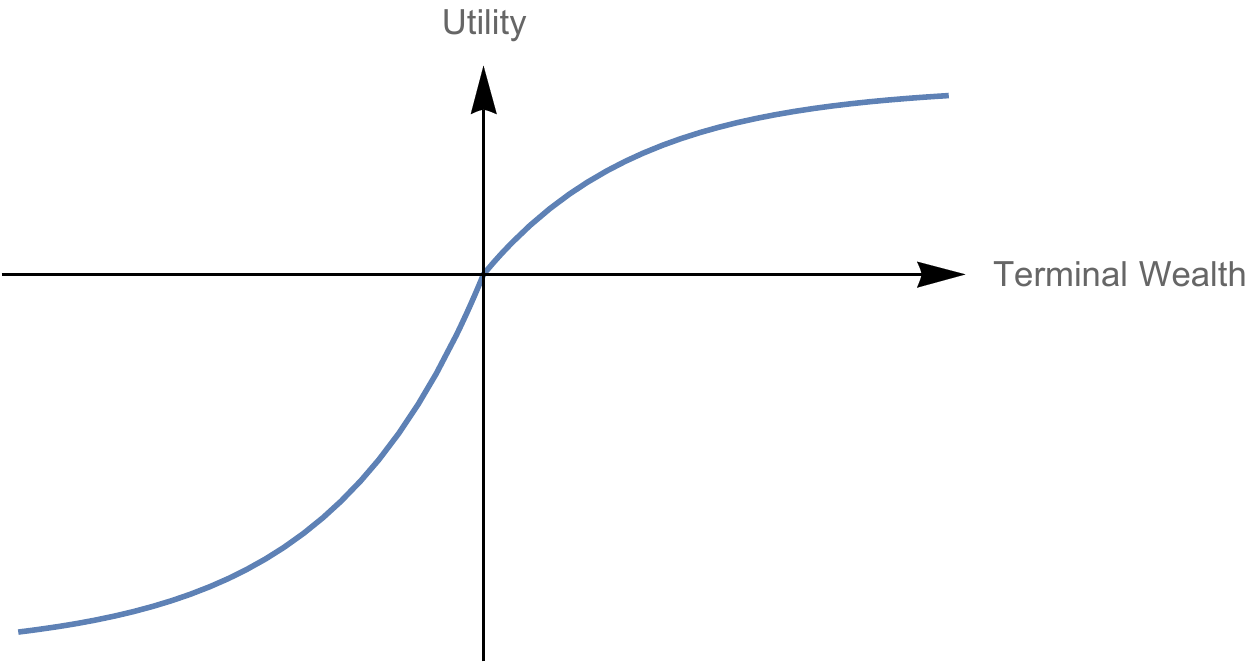}
\caption{An S-Shaped utility function.}
\label{fig:sshapedcurve}
\end{figure}
The prototypical example of S-shaped utility (see for example \cite{follmerBook}, Formula 2.9) is
\begin{equation}\label{eq:ktexample} u(x) = x^\gamma 1_{\{ x \ge 0\}}  - \lambda (-x)^\gamma 1_{\{ x <  0\}} , 
\end{equation}
for a zero benchmark level, with $\lambda > 0$ and $0 < \gamma \le 1$.

It is generally agreed that a rational, loss-averse, risk-averse individual should have
a utility function which is increasing and concave. Thus Kahneman and Tversky's
result appears to give empirical evidence for the hypothesis that either individuals do not behave rationally or that they are not risk-averse.

Alternatively one might argue that the apparent irrationality is due to failing to fully analyse the actual returns experienced by actors. For example, a particular trader may be interested in their pay-packet and not in the performance of their portfolios. Thus the fact that a trader may be willing to risk enormous losses is perfectly rational: they personally only lose their job and possibly their reputation even if they bring down the bank they are working for. The ``true'' utility function of the trader should be a function of their pay-packet. By considering only functions of the PnL of the portfolios they manage, one is given a false impression that the traders are irrational. Similarly, it is perfectly rational for a limited liability company to take enormous risks with other people's money. 

Whether the cause of S-shaped utility functions is irrationality or limited liability, there is certainly good evidence that they are a useful tool for
modelling real worlds behaviour. A regulator or risk-manager should certainly
consider the possibility that they must regulate or manage actors who behave as though governed by S-shaped utility.

Not all of the characteristics of S-shaped utility functions are important to
us in this paper.  We are primarily interested in the convexity on the left. Motivated by Kahneman and Tversky's original example \eqref{eq:ktexample}, we introduce the following definition.

\begin{definition}
An increasing function $u:{\mathbb R}\longrightarrow{\mathbb R}$ (to be thought of as a utility function) is said to be {\em ``risk-seeking in the left tail''} if there exist constants $N \leq 0$, $\eta \in (0,1)$ and $c>0$ such that:
\begin{equation}
u(x) > -c|x|^\eta \qquad \forall x \leq N.
\label{tailDefinition}
\end{equation}

Similarly $u$ is said to be {\em ``risk-averse in the right tail''} if there exists $N \geq 0$, $\eta \in (0,1)$ and $c>0$ such that
\begin{equation}
u(x) < c|x|^\eta \qquad \forall x \geq N.
\label{rightTailDefinition}
\end{equation}
\end{definition}

The standard pictures of ``S-shaped'' utility functions in the literature appear to have these properties. Furthermore the S-shaped utility functions that would arise due to a limited liability would be bounded below and so would certainly be risk-seeking in the left tail.

We give a formal definition of S-shaped for the purposes of this paper.
\begin{definition}
	A function $u$ is said to be ``{\em S-shaped}'' if
	\begin{enumerate}
		\item  $u$ is increasing
		\item  $u(x)\leq 0$ for $x \leq 0$
		\item  $u(x)\geq 0$ for $x \geq 0$
		\item  For $x\geq 0$, $u(x)$ is concave.
		\item  $u$ is risk-seeking in the left tail.
		\item  $u$ is risk-averse in the right tail.
	\end{enumerate}
\end{definition}

\section{Law-invariant portfolio optimization}\label{sec:lawinvariant}

Let $(\Omega, {\cal F}, \P)$ be a probability space and let $\frac{\ed \Q}{\ed \P}$ be a positive random variable with $\int_{\Omega} \frac{ \ed \Q}{\ed \P} \, \ed \P(\omega) = 1$. 

We will use this model to represents a complete financial market as follows:
\begin{enumerate}[(i)]
\item We assume there is a fixed risk free interest rate $r$, assumed to be a deterministic constant. 
\item Given a measurable function, or random variable $f$, one can purchase a derivative security with payoff at time $T$ given by $f(\omega)$ for the price
\begin{equation}
\E^\Q[ e^{-rT} f] := \int_{\Omega} e^{-rT} f(\omega) \frac{\ed \Q}{\ed \P}(\omega) \, \ed \P
\label{eq:riskNeutralPrice}
\end{equation}
assuming that this integral exists.
\end{enumerate}

We note that the properties we require of $\frac{\ed \Q}{\ed \P}$ allow us to define a measure $\ed \Q:=\frac{\ed \Q}{\ed \P} \ed \P$, justifying our notation.

In convex analysis it is often convenient to allow infinite values in calculations (see \cite{rockafellar}). We will use the following conventions.
Let us write $f^+$ and $f^-$ for the positive and negative parts of a measurable function $f$. Suppose that
\[
\int_{\Omega} f^+(\omega) \, \ed \P
\]
is finite but
\[
\int_{\Omega} f^-(\omega) \, \ed \P
\]
is not finite, then we will write
\[
\int_{\Omega} f(\omega) \, \ed \P = -\infty.
\]
We can similarly define what it means for an integral to equal $+\infty$.

We will be considering investment under cost constraints. In our model we will assume that it is possible to purchase a derivative with payoff $f(\omega)$ whose price
according to \eqref{eq:riskNeutralPrice} is $-\infty$ whatever cost constraint is imposed. Assets where the cost is $+\infty$, or where the price is undefined, cannot be purchased. 

We assume that the investor's preferences are encoded by some function 
\[
v: \M \to \R
\]
where $\M$ is the space of probability measures on $\R$, 
so that an investor will prefer a security with payoff $f$ over a security with payoff $g$ iff $v(F_f)>v(F_g)$ (we are writing $F_f$ for the cumulative density function of the random variable $f$). Thus the investor's preferences are law-invariant.

We assume that the investor has a fixed budget $C$ so that they can only purchase securities with payoff $f$ satisfying
\[
-\infty \leq \E^\Q[ e^{-rT} f]=   \int_{\Omega} e^{-rT} f(\omega) \frac{\ed \Q}{\ed \P}(\omega) d\P(\omega) \leq C.
\]
We emphasize that if this integral is equal to $-\infty$ the investor may purchase the security.
As we hinted above, treating infinity in this way is notationally convenient and is standard practice in convex analysis \cite{rockafellar}.  At an intuitive level, we are simply saying that one can always purchase an asset if one is willing to overpay.

We assume that all the other trading constraints are law-invariant. For example:
the investor may have to ensure that the minimum payoff is almost surely above a certain level; they may be operating under $\text{ES}$ or $\text{VaR}$ constraints; they may be operating under utility constraints. We model the
combined constraints using a set ${\cal A}\subseteq{\cal M}_1(\R)$ and
requiring that $F_f \in {\cal A}$.

In summary, our investor wishes to solve the following optimization problem:
\begin{equation}
\begin{aligned}
& \underset{f \in L^0(\Omega,\P)}{\maximizeterm}
& & v(F_f) \\
& \text{subject to a price constraint}
& & -\infty \leq \int_{\Omega} e^{-rT} f(\omega) \frac{\ed \Q}{\ed \P}(\omega) \, \ed \P(\omega) \leq C \\
& \text{and risk management constraints}
& & F_f \in {\cal A} \subseteq {\cal M}_1(\R).
\end{aligned}
\label{optimizationProblem}
\end{equation}

In Appendix \ref{appendix:rearrangementProof} we prove the following theorem which shows we may assume $f$ has a particular form when solving
\eqref{optimizationProblem}.
Let $F^{-1}_f$ denote generalized inverse of the cumulative distribution function $F_f$ defined by $F^{-1}_f(p):=\inf\{x:F_f(x)\geq p\}$. We similarly write $(1-F_f)^{-1}$ for the generalized inverse of complementary cumulative distribution function which is defined by $(1-F_f)^{-1}(p):=\inf\{x \mid 1-F_f(x) \leq p \}$.

\begin{theorem}
\label{theorem:rearrangement}
Suppose that $\Omega$ is non-atomic then there exists a
uniformly distributed random variable $U$
such that:
\begin{enumerate}[(i)]
\item $\frac{\ed \Q}{\ed \P}=(1-F_\frac{\ed Q}{\ed \P})^{-1}\circ U$ almost surely.
\item If $f$ satisfies the price and risk management constraints of \eqref{optimizationProblem}
\ then \[ \varphi(U) =F_f^{-1} \circ U \] also satisfies the constraints of \eqref{optimizationProblem}
and is equal to $f$ in distribution, and hence has the same
objective value as $f$.
\end{enumerate}
\end{theorem}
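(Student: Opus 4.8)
The plan is to reduce the statement to three facts: (a) one can construct a single uniform $U$, depending only on $Z:=\frac{\ed \Q}{\ed \P}$ and an auxiliary randomization, that realizes the nonincreasing representation in (i); (b) for every feasible $f$ the map $\varphi(U)=F_f^{-1}\circ U$ has the same law as $f$, which immediately takes care of the objective value $v(F_f)$ and of all the law-invariant risk constraints $F_f\in\mathcal A$; and (c) replacing $f$ by $\varphi(U)$ never increases the price, so the budget constraint is preserved. The engine behind (c) is the classical Hardy--Littlewood rearrangement inequality (equivalently, the Fr\'echet--Hoeffding bounds): among all couplings with fixed marginals, the counter-monotonic one minimizes the expectation of the product. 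Since $\varphi(U)=F_f^{-1}\circ U$ is nondecreasing in $U$ while $Z=(1-G)^{-1}\circ U$ is nonincreasing in $U$ (writing $G:=F_Z$), the pair $(\varphi(U),Z)$ is counter-monotonic and hence of minimal price.

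For the construction in (i), I would first observe that $(1-G)^{-1}$ is nonincreasing and that if $U$ is uniform then $(1-G)^{-1}(U)$ has CDF $G$, so the right-hand side of (i) at least has the correct law; the real work is to produce a $U$ for which equality holds \emph{almost surely}. When $G$ is continuous this is immediate via $U=1-G(Z)$. To cover atoms of $Z$ (jumps of $G$) I would invoke the distributional transform: since $\Omega$ is non-atomic, there is a uniform random variable $V$ independent of $Z$, and setting
\[
U \;=\; 1 - G(Z-) - V\,\bigl(G(Z)-G(Z-)\bigr)
\]
(with $G(Z-)$ the left-hand limit of $G$ at $Z$) yields a uniform $U$ satisfying $(1-G)^{-1}\circ U=Z$ almost surely, using the pointwise identity $(1-G)^{-1}(u)=G^{-1}(1-u)$. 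Crucially, this $U$ is built once and for all from $Z$, so the same $U$ serves every feasible $f$ in part (ii).

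Part (ii) then splits cleanly. The distributional identity $\varphi(U)=F_f^{-1}\circ U\stackrel{d}{=}f$ is just the inverse-transform theorem applied to the uniform $U$; consequently $v(F_{\varphi(U)})=v(F_f)$ and, because membership in $\mathcal A$ depends only on the law, $F_{\varphi(U)}=F_f\in\mathcal A$. It remains to compare prices. Writing the price as $e^{-rT}\E^\P[\,\cdot\,Z]$ and applying the rearrangement inequality to the counter-monotonic pair gives
\[
\E^\P\bigl[\varphi(U)\,Z\bigr] \;\le\; \E^\P[\,f\,Z\,],
\]
so the price of $\varphi(U)$ is at most that of $f$, which is $\le C$ by hypothesis; the lower bound $-\infty$ is respected trivially.

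The hard part will be making the rearrangement inequality rigorous in exactly the generality the paper needs. Two points require care. First, the construction of $U$ must accommodate atoms of $Z$, which is precisely where non-atomicity of $\Omega$ enters (to extract the independent uniform $V$); without it one cannot in general realize (i) in the almost-sure sense. Second, and more seriously, the paper deliberately allows the price integral to take the value $-\infty$ and does not assume $fZ$ is integrable, so the inequality $\E^\P[\varphi(U)Z]\le\E^\P[fZ]$ must be established in the extended-real sense rather than by quoting the usual $L^1$ version. I would prove it by a truncation and monotone-approximation argument, comparing the products over level sets and passing to the limit, taking care that the $+\infty$ and $-\infty$ contributions are controlled so that the comparison survives in the limit and remains compatible with the paper's sign conventions for divergent integrals.
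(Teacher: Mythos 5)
Your strategy is, in outline, the same as the paper's: couple $Z=\frac{\ed \Q}{\ed \P}$ to a single uniform $U$ by randomizing over the atoms of $F_Z$, dispose of the objective and of ${\cal A}$ by law invariance, and preserve the budget constraint by a rearrangement (Hardy--Littlewood/Fr\'echet--Hoeffding) inequality valid in the extended-real sense. The ``hard part'' you defer to truncation and monotone approximation is exactly what the paper carries out: a layer-cake proof for payoffs bounded below (Lemma \ref{lemma:hardyLittlewood}), then truncation by $\max\{f,k\}$ and monotone convergence (Lemma \ref{lemma:hardyLittlewoodEnhanced}), applied to $-f$ so that comonotone maximization becomes your counter-monotone minimization; the one-sided integrability hypothesis $\int_\Omega(-f)\frac{\ed \Q}{\ed \P}\,\ed\P>-\infty$ that makes this work is supplied automatically by the price constraint $\int_\Omega f\frac{\ed \Q}{\ed \P}\,\ed\P\le e^{rT}C<\infty$, so your sketch of step (c) is viable and matches the paper's Appendix.

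The genuine gap is in your construction for part (i): the claim that non-atomicity of $(\Omega,{\cal F},\P)$ yields a uniform $V$ \emph{independent of} $Z$ is false in general. Take $\Omega=[0,1]$ with Lebesgue measure, $Z(\omega)=\omega$ for $\omega\le\tfrac12$ and $Z(\omega)=2$ for $\omega>\tfrac12$ (rescale if you insist that $Z$ be a density; the point is unaffected). On $[0,\tfrac12]$ the map $Z$ is the identity, so any random variable restricted there is a function $h(Z)$; independence would force $\P(V\le v)=\P(V\le v\mid Z\in B)$ to coincide with the $\{0,1\}$-valued quantity $h(z)\le v$ for a.e.\ $z$, making $V$ almost surely constant, contradicting uniformity. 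Yet this $Z$ has an atom at $2$, so your continuous-case fallback $U=1-G(Z)$ is unavailable and your construction of $U$ cannot be run as stated. The repair is to observe that your formula uses $V$ only through the factor $G(Z)-G(Z-)$, which vanishes off the atoms of $F_Z$: all that is needed is, for each atom $x$, a variable uniformly distributed on the positive-measure set $\{Z=x\}$, on which $\P$ is again non-atomic, so Sierpi\'nski's theorem applies. This is precisely the construction in the paper's Lemma \ref{lemma:sierpinski}, together with the (not entirely trivial) atom-by-atom verification that the assembled $U$ is uniform, which your proposal omits. With that replacement, your parts (ii) and (c) go through as planned.
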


\section{Portfolio optimization with S-shaped utility and expected shortfall constraints}\label{sec:optimiz}

Let $u$ be a function which need not necessarily be either concave or increasing. Consider problem \eqref{optimizationProblem} where the objective, $v$,  
is the expected utility for $u$ and where we have a single expected shortfall constraint. For a definition of value at risk and expected shortfall we refer for example to \cite{mcneil} or \cite{acerbitasche}. Suppose our probability model is non-atomic, and let $U$ be the random variable given in Theorem \ref{theorem:rearrangement} and define $q=1-F_\frac{\ed \Q}{\ed \P}$, so $q$ is a decreasing function.

By Theorem \ref{theorem:rearrangement}, under expected shortfall risk constraint the optimization problem is equivalent to
solving
\begin{align}
\underset{\varphi:[0,1]\to \R, \varphi \text{ increasing}}{\maximizeterm} &\quad {\cal F}(\varphi) := \quad \int_0^1 u(\varphi(x))\, \hbox{d}x  
\label{problem} \\
\text{subject to the price constraint} &\quad \int_0^1 \varphi(x) q(x) \, \hbox{d} x \leq C
\label{budgetConstraint} \\
\text{and the expected shortfall constraint} &\quad \frac{1}{p} \int_0^p \varphi(x) \, \hbox{d} x \geq L.
\label{ESConstraint}
\end{align}


Moreover, this map preserves the objective values and the supremum. Note that the expected shortfall representation in the left hand side of \eqref{ESConstraint} comes from (3.3) in \cite{acerbitasche}.

We are now ready to state the main negative result of this paper.

\begin{theorem}[Irrelevance of expected shortfall constraints under tail risk-seeking]
	\label{thm:ESNonBinding}
	Suppose $u$ is risk-seeking in the left tail and 
	\[ \lim_{x \rightarrow 0} q(x) = \infty \]
	then the  optimal value of the optimization problem
		\eqref{problem} \eqref{budgetConstraint} under the ES constraint \eqref{ESConstraint}
		is the same as the optimal value of the unconstrained problem, $\sup u$.
\end{theorem}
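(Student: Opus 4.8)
The plan is to prove the two inequalities separately. The bound ${\cal F}(\varphi)\le\sup u$ is immediate, since $u(\varphi(x))\le\sup u$ pointwise; the same holds with no constraints, and there the value $\sup u$ is approached by constants $\varphi\equiv M$ as $M\to\infty$. So the constrained optimal value is at most $\sup u$, and it remains to construct feasible payoffs whose objective tends to $\sup u$. I would use a two-level trial function: for a small parameter $\delta\in(0,p)$ set $\varphi_\delta:=-A$ on $[0,\delta]$ and $\varphi_\delta:=M$ on $[\delta,1]$, with $A=A(\delta)\to\infty$ and $M=M(\delta)\to\infty$ to be fixed; this is nondecreasing, hence admissible.

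Write $Q_\delta:=\int_0^\delta q$ and $R_\delta:=\int_\delta^1 q$. Since $q$ is integrable on $[0,1]$ (indeed $\int_0^1 q=\E^\P[\ed\Q/\ed\P]$ up to the discount factor), $Q_\delta\to0$ and $R_\delta\to\kappa:=\int_0^1 q\in(0,\infty)$. The price of $\varphi_\delta$ is $-AQ_\delta+MR_\delta$; saturating the budget by $M:=(C+AQ_\delta)/R_\delta$ makes the price constraint hold with equality and forces $M\to\infty$ exactly when $AQ_\delta\to\infty$. The objective is ${\cal F}(\varphi_\delta)=\delta\,u(-A)+(1-\delta)\,u(M)$, and the left-tail hypothesis $u(-A)>-cA^\eta$ (valid once $A\ge|N|$) gives ${\cal F}(\varphi_\delta)\ge -c\,\delta A^\eta+(1-\delta)\,u(M)$. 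Everything thus reduces to choosing $A(\delta)$ with
\begin{equation*}
AQ_\delta\to\infty\qquad\text{and}\qquad \delta A^\eta\to0 ,
\end{equation*}
for then $M\to\infty$ gives $(1-\delta)u(M)\to\sup u$ (whether $\sup u$ is finite or $+\infty$, since $u(M)\uparrow\sup u$ and $(1-\delta)\to1$) while $-c\,\delta A^\eta\to0$.

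The two requirements are compatible precisely when $1/Q_\delta\ll A\ll \delta^{-1/\eta}$, i.e. when $\delta^{1/\eta}\ll Q_\delta$. This holds because $q$ is decreasing, so $Q_\delta\ge\delta\,q(\delta)$ and hence $Q_\delta/\delta^{1/\eta}\ge q(\delta)\,\delta^{\,1-1/\eta}\to\infty$, since $\eta<1$ makes $1-1/\eta$ negative (so $\delta^{\,1-1/\eta}\to\infty$) and $q(\delta)\to\infty$. Any $A(\delta)$ in the resulting nonempty window (for example the geometric mean of the endpoints) then yields ${\cal F}(\varphi_\delta)\to\sup u$.

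It remains to check that the expected shortfall constraint stays slack along this family, and this is the step where the hypothesis $\lim_{x\to0}q(x)=\infty$ is essential; I expect it to be the main obstacle. For $\delta<p$ we have $\tfrac1p\int_0^p\varphi_\delta=\tfrac1p\big(M(p-\delta)-A\delta\big)$, with $M\sim AQ_\delta/\kappa\to\infty$. The decisive point is that $Q_\delta/\delta=\tfrac1\delta\int_0^\delta q\ge q(\delta)\to\infty$, so the gain term dominates the loss term, $M(p-\delta)/(A\delta)\sim \tfrac{p}{\kappa}\,Q_\delta/\delta\to\infty$, and therefore $\tfrac1p\int_0^p\varphi_\delta\to+\infty\ge L$ for all small $\delta$. (A bounded pricing kernel would make $Q_\delta/\delta$ stay finite and could force the expected shortfall average to $-\infty$, which is exactly why the singularity is needed.) Combining this with the upper bound, ${\cal F}(\varphi_\delta)\to\sup u$ gives the claim.
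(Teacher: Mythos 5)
Your proof is correct, and it uses the same family of trial payoffs as the paper (two-level digital claims with a deep short trench on a shrinking interval near $x=0$, with the left-tail bound $u(x)>-c|x|^\eta$ controlling the utility damage of the trench), but the bookkeeping is dual to the paper's. The paper saturates the expected shortfall constraint \eqref{ESConstraint}: it sets $k_2=(pL-(p-\alpha)k_1)/\alpha$, keeps the upper level $k_1$ fixed, and uses $q(x)\to\infty$ to show the budget constraint \eqref{budgetConstraint} becomes feasible as $\alpha\to 0$ (the sale revenue of the trench, $|k_2|\int_0^\alpha q$, blows up and finances the claim); it obtains $u(k_1)$ in the limit and then takes the supremum over $k_1$ in a second step. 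You instead saturate the budget constraint, setting $M=(C+AQ_\delta)/R_\delta$, let both levels diverge along a single one-parameter family governed by the explicit window $1/Q_\delta\ll A\ll\delta^{-1/\eta}$, and use $q(x)\to\infty$ only to verify that the ES constraint becomes slack, via $Q_\delta/\delta\geq q(\delta)\to\infty$ forcing $\frac1p\int_0^p\varphi_\delta\to+\infty$. Each version buys something: the paper's exhibits payoffs that meet the ES bound with equality (so the constraint is active pointwise yet irrelevant for the value), while yours shows that budget-saturating digitals satisfy any ES bound automatically, reaches $\sup u$ in one limit rather than two, and isolates cleanly the single place where the singularity of the pricing kernel is needed --- your parenthetical observation that a bounded $q$ would keep $Q_\delta/\delta$ finite and could push the shortfall average to $-\infty$ is exactly right, and your verification that the window is nonempty (which, as your argument implicitly shows, would hold even for bounded $q$) is sound. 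Two harmless slips: $\int_0^1 q(x)\,\ed x=\E_\P[\ed\Q/\ed\P]=1$ exactly, with no discount factor (only positivity and finiteness of $\kappa$ are used, so nothing breaks); and in the divergence of the shortfall average one should write $M(p-\delta)-A\delta=M(p-\delta)\bigl(1-A\delta/(M(p-\delta))\bigr)$ rather than factor out $A\delta$, since $A\delta$ need not be bounded away from $0$ --- your ratio computation already gives this.
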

\begin{proof}

We consider functions $\varphi$ of the form:
\begin{equation}
\varphi(x) = \begin{cases}
k_1, & \text{if }x \geq \alpha \\
k_2, & \text{otherwise.}
\end{cases}
\end{equation}
We require that $0<\alpha<p$ and $k_2<k_1$.
For functions of this form we
can rewrite \eqref{problem} as:
\begin{equation}
{\cal F}(\varphi) = \alpha u(k_2) + (1-\alpha)u(k_1)
\label{objectiveFunction2}
\end{equation}
and equations \eqref{budgetConstraint} and \eqref{ESConstraint} as:
\begin{equation*}
\frac{ p L - (p-\alpha)k_1}{\alpha} \leq k_2 \leq
\frac{ C - k_1 \int_\alpha^1 q(x) \, \ed x}
{\int_0^\alpha q(x) \, \ed x. }
\end{equation*}
Let us restrict ourselves further to functions where:
\begin{equation*}
k_2 = \frac{ p L-(p-\alpha)k_1}{\alpha}.
\end{equation*}
So long as $k_1$ is sufficiently large, we will have $k_2<k_1$.
For such functions the ES constraint is automatically satisfied and the
budget constraint becomes:
\begin{equation}
p L - (p-\alpha)k_1 \leq \frac{\alpha}{\int_0^\alpha q(x)\,\ed x}
\left(
C - \int_{\alpha}^{1} q(x)\, \ed x
\right).
\label{budgetConstraint2}
\end{equation}
Taking the limit of the left hand side of \eqref{budgetConstraint2} as $\alpha \rightarrow 0$ we obtain
\[ p L - p k_1. \]
On the other hand the right hand side tends to zero as $\alpha \rightarrow 0$ because
of our assumptions on the function $q(x)$. For all sufficiently large $k_1$ we can
ensure that
\[ p L - p k_1 < -1 < 0. \]
So for sufficiently large $k_1$ and sufficiently small $\alpha$, the budget
constraint will hold.

With the chosen value for $k_2$, our objective function can be written:
\begin{equation*} 
\label{objectiveFunction3}
{\cal F}(\varphi) = \alpha u \left( \frac{p L - (p-\alpha)k_1}{\alpha} \right) + (1-\alpha)u(k_1)
\end{equation*}
Our constraints are now simply that $0<\alpha<\delta$ and $k_1>M$ for some 
values $\delta>0$ and $M>0$.

We wish to show that for sufficiently large $k_1$, the limit of \eqref{objectiveFunction3}
as $\alpha$ tends to zero is $u(k_1)$. Let us choose constants $c$, $N$ and $\eta$ as
given in \eqref{tailDefinition}. We may choose $k_1$ sufficiently large and $\alpha$
sufficiently small so that the following all hold:
\[ k_1 > \frac{2 M}{p}, \]
\[ \frac{p}{4} k_1 > p L, \]
\[ \alpha < \frac{p}{2}. \]
It follows that
\[ p L < \left( \frac{p}{2} - \alpha \right) k_1 \]
and hence
\[ p L < (p -\alpha)X_1 - \frac{p}{2} k_1 < (p - \alpha) k_1 - M \alpha. \]
Thus
\[ \frac{ p L - (p -\alpha) k_1}{\alpha} < - M \]
and we may conclude
\[ u\left( \frac{ p L - (p-\alpha)k_1}{\alpha} \right) > -c \left| \frac{L - (p-\alpha)k_1}{\alpha} \right|^\eta. \]
So for sufficiently large $k_1$ and sufficiently small $\alpha>0$ the objective
function is bounded below:
\begin{equation*}
\begin{split}
{\cal F}(\varphi) & \geq -\alpha c \left| \frac{ p L-(p-\alpha)k_1}{\alpha} \right|^\eta
+ (1-\alpha)u(k_1). \\
&= - c \alpha^{1-\eta} | p L - (p-\alpha)k_1|^\eta + (1-\alpha)u(k_1) \\
&\rightarrow u(k_1) \hbox{ as } \alpha \rightarrow 0
\end{split}
\end{equation*}
Hence the supremum of the objective function is bounded below by $(\sup_x u(x)) - \epsilon$ for any $\epsilon>0$.
On the other hand it is trivial that the objective function  is bounded above by $\sup_x u(x)$.
\end{proof}

We can see the type of optimal payoff from the proof. A sketch of the optimal payoff can be deduced from Figure \ref{fig:optimalpayoff}. 
\begin{figure}
\includegraphics[scale=0.5]{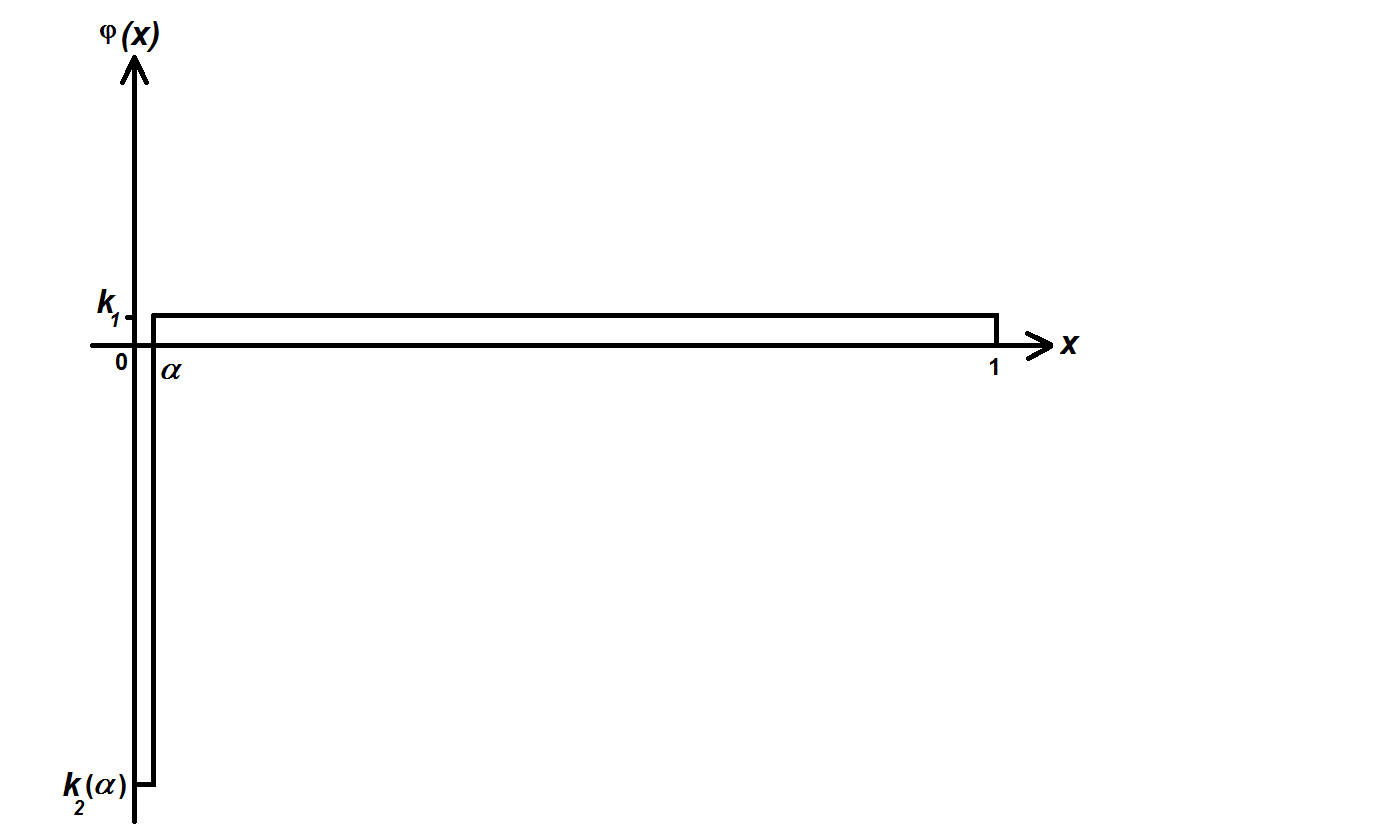}
\caption{Type of payoff whose limit for $\alpha \rightarrow 0$ (implying $k_2(\alpha)  \downarrow -\infty$) is used to find the optimal utility value}\label{fig:optimalpayoff}
\end{figure}
In this figure we focus on an example with positive $k_1$ and negative $k_2 = k_2(\alpha) = k_1 + p(L-k_1)/\alpha$. If we assume $L$ to be negative  then for $\alpha \downarrow 0$ we have $k_2 \downarrow - \infty$ for any positive $k_1$. Essentially the payoff we use is a digital option with a very large (in absolute value) negative value $k_2$ in a very small area of size $\alpha$ near $0$, and with a much smaller positive payoff $k_1$ in the big area $[\alpha,1]$. This is the type of payoff that satisfies the budget and expected shortfall (or VaR in case) constraints while maximizing the expected S-shaped utility. 


\section{Trading simple claims in the Black--Scholes--Merton market}
\label{section:blackScholesDiscrete}

In this section we will consider a investor who wishes to optimize her utility at time $T$ by investing in options with maturity $T$. This investor will follow a buy and hold strategy, but her portfolio will be an option portfolio.

We assume that put and call options can be bought and sold at a wide variety of strikes and so the market can be reasonably well approximated by a complete market. In this market any European derivative whose payoff is a function of the final stock price,  otherwise known as simple contingent claim, can be bought or sold at a fixed price.

We must choose a model for the price of these derivatives. As an
example we will consider European derivatives on a stock which follows the Black--Scholes--Merton model. Thus we consider derivatives on the final stock price in a market where one can trade in either zero coupon bonds with (deterministic) risk free rate $r$ or in a non-dividend paying stock whose price at time $t$, $S_t$, follows a geometric Brownian motion under the $\P$-measure
\[
\ed S_t = S_t( \mu \, \ed t + \sigma \, \ed W^\P_t) , \ \ S_0
\]
with drift $\mu$, volatility $\sigma>0$ and initial condition $S_0>0$. The process $W^\P_t$ is a standard Brownian motion under the $\P$ probability measure.

By Ito's formula, the log of the stock price, $s_t = \ln S_t$, satisfies
\[
\ed s_t = \left(\mu - \frac{1}{2}\sigma^2\right) \, \ed t + \sigma \, \ed W^{\P}_t, \ \  s_0 = \ln(S_0). 
\]
Hence, under the $\P$-measure, $s_T$ is normally distributed with mean $s_0 + (\mu - \frac{1}{2}\sigma^2)T$ and standard deviation $\sigma \sqrt{T}$. Let us
write the density function explicitly:
\begin{equation}
p_{s_T}^{\text{BS}}(x) = \frac{1}{ \sigma \sqrt{ 2 \pi T}} \exp\left( - \frac{(x-(s_0+ (\mu-\frac{1}{2}\sigma^2)T))^2}{2 \sigma^2 T} \right).
\label{eq:bsPDensity}
\end{equation}

The standard pricing theory in the Black--Scholes--Merton market tells us
that the price of a European derivative in this market can be computed using the discounted $\Q$-measure expectation of the payoff where the stock price process in the $\Q$-measure is
\[
\ed S_t = S_t(r \, \ed t + \sigma \, \ed W^\Q_t ), \ \ S_0,
\]
where now $W^\Q_t$ is a standard Brownian motion under the measure  $\Q$. 

Hence we can write down the $\Q$-measure density function for $s_T$
\begin{equation}
q^{\text{BS}}_{s_T}(x) = \frac{1}{ \sigma \sqrt{ 2 \pi T}} \exp\left( - \frac{(x-(s_0+ (r-\frac{1}{2}\sigma^2)T))^2}{2 \sigma^2 T} \right).
\label{eq:bsQDensity}
\end{equation}

Now let $(\Omega^D, {\cal F}^D, \P^D)$ be the probability space for
the final log stock price $s_T$. The superscript $D$ stands for ``derivatives market''. This probability space is simply $\R$
with probability density given by $p^{\text{BS}}$ and $\sigma$-field given by the Borel set.
We can define a random variable $\frac{\ed \Q}{\ed \P}^D(s_T) := q^\text{{BS}}(s_T)/p^\text{{BS}}(s_T)$ on $(\Omega^D, {\cal F}^D, \P^D)$. Together, $\frac{\ed \Q}{\ed \P}^D$ and $(\Omega^D, {\cal F}^D, \P^D)$ 
define a market: the market of European derivatives on the final stock price $s_T$. For any payoff function $f$ of $s_T$ the price of the derivative is given by \eqref{eq:riskNeutralPrice} so long as this integral exists and is less than $\infty$. This is a complete market.

Let $U$ be the standard uniform random variable given by $F_{s_T}(s_T)$. We calculate 
\[
F^{-1}_{s_T}(U) =s_0 + T \left(\mu -\frac{\sigma ^2}{2}\right) + \sigma  \sqrt{T} \Phi^{-1}(U)
\]
where as usual $\Phi$ us the cumulative distribution function of the standard normal. Using the explicit formulae for $q(s_T)$ and $p(s_T)$ we compute

\[
\begin{split}
\frac{\ed \Q}{\ed \P}^D(U) =
\frac{q(F^{-1}_{s_T}(U))}{p(F^{-1}_{s_T}(U))} 
&= 
 \exp \left(\frac{\mu-r}{2 \sigma^2} \left( (\mu+r-\sigma^2)T + 2 (s_0 - F^{-1}_{s_T}(U)) \right)\right)   \\
&=  \exp \left[\frac{\mu-r}{ \sigma} \left( -\frac{\mu-r}{\sigma}\frac{ T}{2} - \sqrt{T} \Phi^{-1}(U)\right)\right]
\end{split}
\]
where we have highlighted the role of the market price of risk $\frac{\mu-r}{\sigma}$.
If we assume $\mu > r$, then this function is decreasing in $U$. Since $U$ is uniform, we conclude that this expression is equal to $(1-F_{ \frac{\ed \Q}{\ed \P}}(U))^{-1}$. We see that if $\mu > r$ then $\frac{\ed \Q}{\ed \P}^D \rightarrow \infty$ as  $U \to 0$. If $\mu < r$ then $\frac{\ed \Q}{\ed \P}^D \rightarrow \infty$ as  $U \to 1$.

Thus European derivatives at time $T$ in the Black--Scholes--Merton market satisfy the assumptions of Theorem \ref{thm:ESNonBinding}. 
We have the following
\begin{corollary}[Irrelevance of expected shortfall in reducing tail-risk-seeking behaviour in a Black--Scholes market]
In a Black--Scholes model, the expected utility of a investor who is risk-seeking in the left tail is limited only by the supremum of her  utility function. Investors can achieve any desired  expected utility below this supremum by trading in the bond and a digital option. Expected shortfall constraints do not impact the expected utility corresponding to the optimal solution.
\end{corollary}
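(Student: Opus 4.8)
The plan is to recognise the corollary as a direct application of Theorem~\ref{thm:ESNonBinding}, so that the work reduces to verifying its two hypotheses in the Black--Scholes--Merton derivatives market: that the underlying probability model is non-atomic, and that the reparametrised pricing kernel $q$ satisfies $\lim_{x\to 0} q(x)=\infty$. First I would observe that the derivatives market is modelled on $(\Omega^D,\mathcal F^D,\P^D)=(\R,\mathrm{Borel},p^{\mathrm{BS}})$, whose law is absolutely continuous with the normal density \eqref{eq:bsPDensity}; hence the space is non-atomic and Theorem~\ref{theorem:rearrangement} applies, allowing us to restrict attention to payoffs that are increasing functions $\varphi$ of the uniform variable $U=F_{s_T}(s_T)$.

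Next I would feed in the pricing-kernel computation already carried out above. Expressing $\frac{\ed \Q}{\ed \P}^D$ as a function of $U$ gives, when $\mu>r$, a strictly decreasing function that blows up as $U\to 0$, and which we have identified with $(1-F_{\ed \Q/\ed \P})^{-1}\circ U$ in accordance with part (i) of Theorem~\ref{theorem:rearrangement}. This is exactly the function $q$ appearing in the budget constraint \eqref{budgetConstraint}, so the condition $\lim_{x\to 0}q(x)=\infty$ holds. Since the investor is by assumption risk-seeking in the left tail, both hypotheses of Theorem~\ref{thm:ESNonBinding} are met, and I may conclude that the optimal value of the ES-constrained problem equals $\sup u$, the value of the unconstrained problem; this is precisely the assertion that the expected shortfall constraint does not affect the optimal expected utility.

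Finally, to justify the claim that the optimum is approached by trading only a bond and a digital option, I would read off the near-optimal payoffs from the proof of Theorem~\ref{thm:ESNonBinding}: they are the two-valued step functions $\varphi(x)=k_1$ on $[\alpha,1]$ and $\varphi(x)=k_2$ on $[0,\alpha)$ with $k_2<k_1$. Under the identification $U=F_{s_T}(s_T)$ the event $\{U<\alpha\}$ equals $\{S_T<K\}$ for the strike $K=\exp\!\big(F_{s_T}^{-1}(\alpha)\big)$, so $\varphi(U)=k_2+(k_1-k_2)\,1_{\{S_T\ge K\}}$ is exactly a bond paying $k_2$ together with $(k_1-k_2)$ units of a digital call struck at $K$. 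Letting $\alpha\downarrow 0$ (so that $k_2\downarrow-\infty$) drives the expected utility to $\sup u$, yielding the stated family of strategies.

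The main obstacle I expect is bookkeeping around the sign of the market price of risk $\frac{\mu-r}{\sigma}$: the clean application above uses $\mu>r$, for which the kernel blows up at $U\to 0$, matching the left-tail orientation of the ES constraint \eqref{ESConstraint}. When $\mu<r$ the blow-up occurs at $U\to 1$ instead, and I would handle this by the symmetric relabelling $U\mapsto 1-U$, i.e.\ reorienting the monotone rearrangement, after which the same argument applies verbatim. A secondary point to confirm, already settled by the computation preceding the corollary, is that the concrete choice $U=F_{s_T}(s_T)$ genuinely coincides with the variable furnished by Theorem~\ref{theorem:rearrangement}, so that the reduction to problem \eqref{problem}--\eqref{ESConstraint} is legitimate.
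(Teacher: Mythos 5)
Your proposal is correct and follows essentially the same route as the paper: the corollary is stated immediately after the in-text computation showing that $\frac{\ed \Q}{\ed \P}^D$, written as a function of $U=F_{s_T}(s_T)$, coincides with $(1-F_{\ed\Q/\ed\P})^{-1}\circ U$ and blows up as $U\to 0$ (for $\mu>r$), so that the hypotheses of Theorem~\ref{thm:ESNonBinding} on the non-atomic space $(\Omega^D,{\cal F}^D,\P^D)$ are verified, exactly as you argue. Your reading of the near-optimal two-valued payoffs as bond plus digital option, and your reorientation $U\mapsto 1-U$ for the case $\mu<r$, likewise match the paper's discussion of Figure~\ref{fig:optimalpayoff} and its remark that the blow-up then occurs at $U\to 1$.
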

\begin{remark}
We have assumed in our analysis that the horizon of the investment, $T_1$,
and the ES time horizon, $T_2$ coincide. Typically the ES time horizon is the time one estimates could be needed to liquidate the position in a hostile market
so in practice one would have $T_2<T_1$. However, an investor who wishes to maximize their utility at time $T_1$ could choose to restrict themselves to buying derivatives with maturity $T_2$ and then holding investing the payoff in zero coupon bonds until time $T_1$ in which case the payoff at time $T_1$ would be a function of the payoff at time $T_2$.
\end{remark}
\begin{remark}
We have illustrated our results with the Black--Scholes--Merton model for simplicity. The key observation was that densities for $p^{BS}$ and $q^{BS}$
were normal but with different drifts, so the ratio $q^{BS}/p^{BS}$ is unbounded.  Over sufficiently short time horizons, the density of any stochastic process driven by It\^o equations can be well approximated by
multivariate normal distributions. This can be expressed rigorously using asymptotic formulae for the heat kernel of a stochastic process (see for example \cite{hsu}). Thus over short time horizons one expects to find that $\frac{\ed \Q}{\ed \P}$ will be unbounded for any market model defined using It\^o calculus where the market price of risk is non-zero. One can easily devise examples of stochastic processes which converge to a fixed value at a future time $T$, so we cannot deduce that $\frac{\ed \Q}{\ed \P}$ will be unbounded at time $T$. Nevertheless, one expects that in a realistic market model $\frac{\ed \Q}{\ed \P}$ will indeed be unbounded at any time $T$.
\end{remark}

\section{Portfolio optimization with limited liability and utility constraints}\label{sec:limlam}

We saw previously important but negative results: tail-risk-seeking investors and investors who aim to maximize S-shaped utilities are not impacted by expected shortfall constraints. While this tells us that in this context expected shortfall is not effective in curbing excessive risk taking, what should one do instead? We have reached the point in the paper where we can make a positive proposal for an alternative approach.

Let us return to problem \eqref{optimizationProblem}. We suppose that 
the regulator is indifferent to the outcome if the portfolio payoff is positive and imposes
a risk constraint on the expected utility of the negative part of the payoff. We  specialise our analysis to the case where the investor has limited liability and so is indifferent to the outcome if the portfolio payoff is negative. We model this by choosing two utility functions $u_R$ and $u_I$ representing
the regulator and the investor's utility functions respectively. 

\begin{theorem}
\label{theorem:limitedLiability}
Let $u_R(x):\R\to\R$ be a concave increasing function (associated with the risk constraint utility function) equal to $0$ when $x\geq0$.
Let $u_I(x):\R\to\R$ be an increasing function equal to $0$ when $x\leq 0$ and concave
in the region $x\geq 0$ (associated with the investor utility function).  Define $v:{\cal M}_1(\R) \to \R \cup \{\infty\}$
by the expected $u_I$-utility over a random variable distributed as $f$:
\[
v(F_f)=\int_{ - \infty}^{\infty} u_I(x) \, \ed F_f(x).
\]
Let $L$ be a negative real number. Define ${\cal A} \subseteq {\cal M}_1(\R)$ by
\[
{\cal A}=\left\{ F_f \mid \int_{ - \infty}^{\infty} u_R(x) \, \ed F_f (x) \geq L \right\}.
\]
This is the set of distribution functions leading to an expected $u_R$-utility larger than a possible ``loss" level $L$. 
It is worth mentioning that both the objective and the admissible set are formulated in terms of expected utility functions.

The supremum for the optimization problem \eqref{optimizationProblem} can then be computed
as follows.

Define $q(x)=(1-F_\frac{\ed \Q}{\ed \P})^{-1}(x)$. 

Given $p \in [0,1]$, define $C_1(p) \in \R \cup \{-\infty\}$ to be the infimum of the optimization problem
\begin{equation}
\begin{aligned}
\underset{f_1:[0,p]\to \  (-\infty,0), \text{ with $f_1$ increasing}
}{\inf} &  \quad \int_0^p f_1(x) q(x) \ed x \\
\text{subject to} & \quad \int_0^p u_R(f_1(x)) \, \ed x \geq L. \\
\end{aligned}
\label{leftOptimizationProblem}
\end{equation}

Define $V(p) \in \R \cup \{\infty\}$ to
be the supremum of the optimization problem
\begin{equation}
\begin{aligned}
\underset{f_2:[p,1]\to [0,\infty), \text{ with $f_2$ increasing}
}{\maximizeterm} &  \quad \int_p^1 u_I(f_2(x)) \ed x \\
\text{subject to} & \quad \int_p^1 f_2(x) q(x) \ed x \leq C_2. \\
\end{aligned}
\label{rightOptimizationProblem}
\end{equation}
with $C_2(p)$ defined by
\begin{equation}
C_2(p) := e^{r T} C - C_1(p)
\label{eq:c2Definition}
\end{equation}
(recall that $C_1$ comes from the first problem above). 
Then the supremum of the problem \eqref{optimizationProblem} is equal to
$\sup_{p\in[0,1]} V(p)$.
\end{theorem}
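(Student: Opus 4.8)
The plan is to reduce the abstract problem to its rearranged one-dimensional form and then exploit the complementary support of the two utilities to split it at the sign change of the optimizer. By Theorem \ref{theorem:rearrangement} we may replace \eqref{optimizationProblem} by the problem of maximizing $\int_0^1 u_I(\varphi(x))\,\ed x$ over increasing $\varphi:[0,1]\to\R$, subject to the budget $\int_0^1 \varphi(x) q(x)\,\ed x \leq e^{rT}C$ and the risk constraint $\int_0^1 u_R(\varphi(x))\,\ed x \geq L$. The decisive structural observation is that $u_I$ vanishes on $(-\infty,0]$ while $u_R$ vanishes on $[0,\infty)$: the objective only sees the region where $\varphi\geq 0$, and the risk constraint only sees the region where $\varphi<0$. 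Since $\varphi$ is increasing there is a well-defined crossing point $p:=\sup\{x:\varphi(x)<0\}\in[0,1]$ with $\varphi<0$ on $[0,p)$ and $\varphi\geq 0$ on $(p,1]$. Writing $f_1=\varphi|_{[0,p]}$ and $f_2=\varphi|_{[p,1]}$, the objective becomes $\int_p^1 u_I(f_2)\,\ed x$, the risk constraint becomes $\int_0^p u_R(f_1)\,\ed x\geq L$, and the budget splits additively into $\int_0^p f_1(x)q(x)\,\ed x+\int_p^1 f_2(x)q(x)\,\ed x\leq e^{rT}C$.

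For the \emph{upper bound} I would fix any feasible $\varphi$ with crossing point $p$. Its left piece $f_1$ is feasible for \eqref{leftOptimizationProblem}, so $\int_0^p f_1(x)q(x)\,\ed x\geq C_1(p)$ by definition of the infimum. Feeding this into the budget shows that the right piece satisfies $\int_p^1 f_2(x)q(x)\,\ed x\leq e^{rT}C-C_1(p)=C_2(p)$, so $f_2$ is feasible for \eqref{rightOptimizationProblem} and the objective value $\int_p^1 u_I(f_2)\,\ed x\leq V(p)\leq \sup_{p'\in[0,1]}V(p')$. As this holds for every feasible $\varphi$, the supremum of \eqref{optimizationProblem} is at most $\sup_p V(p)$.

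For the \emph{reverse inequality} I fix $p\in[0,1]$ and $\epsilon>0$ and reconstruct a feasible $\varphi$ of nearly optimal objective. I would choose $f_2$ feasible for \eqref{rightOptimizationProblem} with $\int_p^1 u_I(f_2)\,\ed x\geq V(p)-\epsilon$ (or arbitrarily large when $V(p)=\infty$) and, crucially, using budget $\int_p^1 f_2(x)q(x)\,\ed x\leq C_2(p)-\delta$ strictly below $C_2(p)$. Then the remaining budget $e^{rT}C-\int_p^1 f_2 q\geq C_1(p)+\delta>C_1(p)$, so by definition of the infimum $C_1(p)$ there exists an $f_1$ feasible for \eqref{leftOptimizationProblem} whose cost fits inside it. Gluing $f_1$ (negative, increasing) below $f_2$ (nonnegative, increasing) produces an increasing $\varphi$ respecting the full budget and the risk constraint, with objective at least $V(p)-\epsilon$. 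Letting $\epsilon\to 0$ gives $V(p)\leq$ the supremum of \eqref{optimizationProblem}, and taking the supremum over $p$ completes the argument.

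I expect the main obstacle to be the bookkeeping around the shared budget together with the possible infinite values. Because $C_1(p)$ is an infimum and $V(p)$ a supremum, neither need be attained, so one must guarantee that a near-optimal $f_2$ can be taken to leave a strictly positive slack $\delta$ for $f_1$; this is the one place where a genuine perturbation/continuity argument is needed, using that slightly scaling $f_2$ downward reduces the budget by $\delta$ while lowering $\int_p^1 u_I(f_2)\,\ed x$ by at most $\epsilon$, so that the value of \eqref{rightOptimizationProblem} at budget $C_2(p)-\delta$ tends to $V(p)$. The degenerate cases $C_1(p)=-\infty$ (whence $C_2(p)=+\infty$ and the right-hand budget never binds) and $V(p)=+\infty$, as well as the endpoints $p=0$ (no loss region) and $p=1$ (no gain region, $V(1)=0$), must be tracked consistently through the conventions on infinite integrals fixed earlier in the paper.
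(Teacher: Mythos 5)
Your proposal is correct and follows essentially the same route as the paper: reduce via Theorem \ref{theorem:rearrangement} to the one-dimensional problem over increasing $\varphi$, split at the sign-change point $p$ using the complementary supports of $u_I$ and $u_R$, and decouple the shared budget into the two sub-problems \eqref{leftOptimizationProblem} and \eqref{rightOptimizationProblem}. In fact you supply strictly more detail than the paper, whose proof asserts ``it is obvious that $V(p)=\tilde{V}(p)$'' where you give the $\epsilon$--$\delta$ gluing argument (with the scaling perturbation of $f_2$ to create budget slack, justified by concavity of $u_I$ with $u_I(0)=0$) needed to pass between the joint and sequential optimizations when neither extremum is attained.
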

\begin{remark}
The value of the theorem comes from
the fact that the problems \eqref{leftOptimizationProblem} and
\eqref{rightOptimizationProblem} are easy to solve, see Lemma \ref{lemma:convexOptimization} below. One
may then compute $\sup_{p\in[0,1]} V(p)$
by line search. Moreover, it is simple
to obtain an explicit solution of \eqref{optimizationProblem} given solutions to each of these simpler problems. The risk constraint will typically be relevant, unlike the case of expected shortfall.
\end{remark}

\begin{remark}
Although we have specialised to the case of limited liability, note that the strategies pursued by an investor with limited liability will be at least as aggressive than those pursued by an investor with S-shaped utility. Thus if we can find bounds for an investor with limited liability, we will
obtain bounds for more general S-shaped utilities. Finding explicit solutions to the problem \eqref{optimizationProblem} for general S-shaped utilities would seem a rather more difficult problem.
\end{remark}

\begin{proof}
By Theorem \ref{theorem:rearrangement}, the optimization problem \ref{optimizationProblem} is equivalent to solving
\begin{equation}
\begin{aligned}
\underset{\varphi:[0,1]\to \R, \text{ with $\varphi$ increasing}}{\maximizeterm} & \quad \int_0^1 u_I(\varphi(x)) \, \ed x \\
\text{subject to} & \quad \int_0^1 u_R(\varphi(x)) \, \ed x \geq L \\
\text{and} & \quad  -\infty \leq \int_0^1 \varphi(x) q(x) \, \ed x \leq e^{rT} C.
\end{aligned}
\label{optimizeWithUtilityConstraint}
\end{equation}
where $q=(1-F_\frac{\ed \Q}{\ed \P})^{-1}$ and so is decreasing with integral $1$.

Since in problem \eqref{optimizeWithUtilityConstraint} we
require that $\varphi$ is increasing, there is some $p\in[0,1]$ such that $\varphi(x)$ is less than $0$ for $x$ less than $p$ and $\varphi(x)$ is greater than $0$ for $x$ greater than $p$. Since the value of the integrals in the optimization problems is unaffected by the value of $\varphi$ at the single point $p$, we may also assume that $\varphi(p)=0$.

For a fixed $p$, we may define $f_1$ to be the restriction of $\varphi$ to $[0,p]$ and $f_2$ to be the restriction of $\varphi$ to $[p,1]$. Let us write
$\tilde{V}(p)\in \R \cup \{\pm \infty\}$ for the value of the supremum in the problem. 
\[
\begin{aligned}
\underset{f_2:[p,1]\to [0,\infty), \text{ with $f_2$ increasing}
}{\underset{f_1:[0,p]\to [-\infty,0), \text{ with $f_1$ increasing}}{\maximizeterm}} & 
\quad \int_p^1 u_I(f_2(x)) \, \ed x \\
\text{subject to} & \quad \int_0^p u_R(f_1(x)) \, \ed x \geq L \\
\text{and} & \quad  - \infty \leq \int_0^p f_1(x) q(x) + \int_p^1 f_2(x) q(x) \, \ed x \leq e^{rT} C.
\end{aligned}
\]

We use the value $-\infty$ to indicate that the constraints cannot be satisfied
as is conventional in convex analysis.
The supremum of problem \eqref{optimizeWithUtilityConstraint} and hence of \eqref{optimizationProblem} is given by
\[
\underset{p\in[0,1]}{\sup} \tilde{V}(p).
\]
It is obvious that $V(p)=\tilde{V}(p)$.
\end{proof}

	The case when the supremum of the optimization problem \eqref{rightOptimizationProblem} is equal to the supremum of the investor's utility function $u_I$ is rather uninteresting as the risk-constraints
	clearly will play no role. This motivates the following definition.
	\begin{definition}
		In a given market, an investor with utility function $u_I$ is said to be {\em difficult to satisfy} if the supremum of the optimization problem \eqref{rightOptimizationProblem} is less then the supremum of their utility function for any finite cost constraint $C_2$ and any $p \in(0,1)$.
	\end{definition}

\begin{lemma}
\label{lemma:convexOptimization}
Let $A$ and $B$ be constants satisfying $-\infty\leq A < B \leq \infty$
and let $a$ and $b$ be finite constants satisfying $a<b$. We will write
$\mathcal I$ for the set of increasing functions mapping $[a,b]$ to $[A,B]$.

Suppose that $q:[a,b]\to \R$ is a positive decreasing function with finite integral. Suppose that $u$ is a concave increasing function. Let $\partial u(x)$ be the set
\[
\partial u (x) = \{ y\in[0, \infty) \mid \forall x^\prime \in [A,B],  u(x^\prime) \leq u(x) + y(x^\prime - x)
 \}.
\]
Apart from at the boundary points $\{A,B\}$, this is the subdifferential of the concave function $u$\footnote{See \cite{rockafellar} for a discussion of subdifferentials. As remarked in \cite{rockafellar} the term subdifferential is used for both convex and concave functions even though superdifferential might be considered a more apt term for concave functions.}.
Let $\alpha$ be a constant and let $\varphi^\star \in F$ satisfy
\begin{equation}
 \alpha q(x) \in  \partial u(\varphi^*(x)) 
\label{eq:optimalityCondition}
\end{equation}
for every $x$. Then $\varphi^*$ is a solution to the maximization problem
\begin{equation}
\begin{aligned}
\underset{\varphi \in F}{\maximizeterm} & \quad \int_a^b u(\varphi(x)) \, \ed x \\
\text{subject to} & \int_a^b \varphi(x) q(x) \, \ed x \leq \int_a^b \varphi^*(x) q(x) \, \ed x
\end{aligned}
\label{eq:boundedMaximization}
\end{equation}
and the minimization problem
\begin{equation}
\begin{aligned}
\underset{\varphi \in F}{\inf} & \quad \int_a^b \varphi(x) q(x) \, \ed x \\
\text{subject to} & \int_a^b u(\varphi(x)) \, \ed x \geq \int_a^b u(\varphi^*(x)) \, \ed x.
\end{aligned}
\label{eq:boundedMinimization}
\end{equation}
\end{lemma}
\begin{proof}
Let $\varphi:[a,b]\to [A,B]$ be another function. By the assumption \eqref{eq:optimalityCondition} we have
\[
u(\varphi(x)) \leq u(\varphi^*(x)) + \alpha q(x) (\varphi(x)-\varphi^*(x)).
\]
Integrating this
\[
\int_a^b u(\varphi(x)) \, \ed x \leq \int_a^b u(\varphi^*(x)) \, \ed x + \alpha \int_a^b q(x) (\varphi(x)-\varphi^*(x)) \, \ed x.
\]
So if $\varphi$ satisfies the constraints of \eqref{eq:boundedMaximization} we conclude
\[
\int_a^b u(\varphi(x)) \ed x \leq \int_a^b u(\varphi^*(x)) \, \ed x.
\]
Thus $\varphi^*$ solves the problem \eqref{eq:boundedMaximization}. Similarly $\varphi^*$ solves \eqref{eq:boundedMinimization}.
\end{proof}
\begin{remark}
Lemma \ref{lemma:convexOptimization} can also be used to solve portfolio optimization problems
of the form \eqref{optimizationProblem} where the only constraints are
bounds on the payoff function $f$. These problems are considered in more
detail in \cite{follmerBook}, with a greater emphasis on the uniqueness of the solutions.
\end{remark}

Continuing with the propositive part of the paper, we now compute the solution of the
problem in Theorem \ref{theorem:limitedLiability} in a
specific case. The significance of this computation
is that it will allow us to immediately
write down an upper bound on the solution of the problem in \ref{theorem:limitedLiability} for
many financially interesting cases.

\begin{theorem}
Let $\gamma_R \in (1, \infty)$ be given. Let
\[
u_R(x) = \begin{cases}
-(-x)^{\gamma_R} & x \leq 0 \\
0 & \text{otherwise}. \\
\end{cases}
\]
Suppose that we wish to solve the optimization problem of Theorem \ref{theorem:limitedLiability} and that $u_I$ is such that the investor is difficult to satisfy. The risk constraint in Theorem \ref{theorem:limitedLiability} is binding if and only if the expectation
\[
\E_\P \left( \frac{ \ed \Q}{\ed \P}^\frac{\gamma_R}{\gamma_R-1} \right)
\]
is finite.

If the investor's utility function is given by
\[
u_I(x) = \begin{cases}
x^{\gamma_I} & x \geq 0 \\
0 & \text{otherwise} \\
\end{cases}.
\]
for $\gamma_I \in (0,1)$, then the investor is difficult to satisfy if the expectation
\[
\E_\P \left( \frac{ \ed \Q}{\ed \P}^\frac{\gamma_I}{\gamma_I-1}\textsl{} \right)
\]
is finite.
\label{thm:gammaUtility}
\end{theorem}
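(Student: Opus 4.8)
The plan is to reduce the statement, via Theorem~\ref{theorem:limitedLiability}, to a separate analysis of the two decoupled subproblems \eqref{leftOptimizationProblem} and \eqref{rightOptimizationProblem}, interpreting ``the risk constraint binds'' as ``the left subproblem has finite optimal cost $C_1(p)$'': if $C_1(p)=-\infty$ then through \eqref{eq:c2Definition} the investor is handed an unbounded budget $C_2(p)=+\infty$ and the constraint is void, whereas if $C_1(p)$ is finite the loss region genuinely consumes budget. The single computational device throughout is that $q=(1-F_\frac{\ed \Q}{\ed \P})^{-1}$ is the decreasing quantile transform of $\frac{\ed \Q}{\ed \P}$: by Theorem~\ref{theorem:rearrangement}(i), $\frac{\ed \Q}{\ed \P}=q\circ U$ for a uniform $U$, so
\[
\int_0^1 q(x)^\beta\,\ed x=\E_\P\!\left(\frac{\ed \Q}{\ed \P}^{\,\beta}\right)
\]
for every exponent $\beta$. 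Since $q$ is decreasing, finite on each $[p,1]$ with $p>0$, and blows up at $0$, finiteness of $\int_0^p q^\beta$ for $\beta>0$, or of $\int_p^1 q^\beta$ for $\beta<0$, is in each case equivalent to finiteness of the full moment; this bookkeeping converts integral conditions on $q$ into the stated moment conditions.

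For the first assertion I would apply Lemma~\ref{lemma:convexOptimization} to \eqref{leftOptimizationProblem} with $u=u_R$. On $(-\infty,0)$ we have $u_R'(x)=\gamma_R(-x)^{\gamma_R-1}$, so the optimality condition \eqref{eq:optimalityCondition}, namely $\alpha q(x)=\gamma_R(-\varphi^\star(x))^{\gamma_R-1}$, produces the increasing negative candidate $\varphi^\star(x)=-\big(\alpha q(x)/\gamma_R\big)^{1/(\gamma_R-1)}$. Writing $\gamma_R'=\gamma_R/(\gamma_R-1)$ and substituting into the cost gives
\[
C_1(p)=-\Big(\tfrac{\alpha}{\gamma_R}\Big)^{1/(\gamma_R-1)}\int_0^p q(x)^{\gamma_R'}\,\ed x,
\]
with $\alpha$ fixed by making the constraint active, $\int_0^p u_R(\varphi^\star)\,\ed x=L$. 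This determines a finite $\alpha>0$ precisely when $\int_0^p q^{\gamma_R'}<\infty$, equivalently $\E_\P(\frac{\ed \Q}{\ed \P}^{\,\gamma_R'})<\infty$, and in that case $C_1(p)$ is finite and the constraint binds.

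For the converse I would show that $\E_\P(\frac{\ed \Q}{\ed \P}^{\,\gamma_R'})=\infty$ forces $C_1(p)=-\infty$. The relevant estimate is H\"older with conjugate exponents $\gamma_R,\gamma_R'$: for any admissible $f_1$, with $g=-f_1\geq0$ and $\int_0^p g^{\gamma_R}\le -L$,
\[
\int_0^p g\,q\,\ed x\le (-L)^{1/\gamma_R}\Big(\int_0^p q^{\gamma_R'}\,\ed x\Big)^{1/\gamma_R'},
\]
which becomes vacuous when the right integral diverges. To show the infimum actually reaches $-\infty$ I would exhibit a truncated version of the extremal profile $g\propto q^{1/(\gamma_R-1)}$ — capping it on a shrinking interval $[0,1/n]$ around the singularity of $q$ — chosen so that $\int_0^p g_n^{\gamma_R}\le -L$ is preserved while $\int_0^p g_n\,q\to\infty$. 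Verifying the sharpness of H\"older subject to the monotonicity constraint on $f_1$, and controlling the contribution of the cap near $0$, is the one genuinely delicate point; everything else is bookkeeping.

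For the final assertion about $u_I(x)=x^{\gamma_I}$ I need only sufficiency and only an upper bound on \eqref{rightOptimizationProblem}. Writing $f_2^{\gamma_I}=(f_2 q)^{\gamma_I}q^{-\gamma_I}$ and applying H\"older with the conjugate pair $1/\gamma_I,\,1/(1-\gamma_I)$ yields, for any admissible $f_2$,
\[
\int_p^1 f_2^{\gamma_I}\,\ed x\le\Big(\int_p^1 f_2 q\,\ed x\Big)^{\gamma_I}\Big(\int_p^1 q^{\gamma_I/(\gamma_I-1)}\,\ed x\Big)^{1-\gamma_I}\le C_2^{\gamma_I}\Big(\int_p^1 q^{\gamma_I/(\gamma_I-1)}\,\ed x\Big)^{1-\gamma_I},
\]
using the budget constraint. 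Since $\gamma_I\in(0,1)$ the exponent $\gamma_I/(\gamma_I-1)$ is negative, so $q^{\gamma_I/(\gamma_I-1)}\to0$ at the singularity of $q$ and, by the bookkeeping above, the integral is finite exactly when $\E_\P(\frac{\ed \Q}{\ed \P}^{\,\gamma_I/(\gamma_I-1)})<\infty$. Under that moment condition $V(p)$ is finite for every finite $C_2$ and every $p\in(0,1)$, whereas $\sup u_I=+\infty$; hence the investor is difficult to satisfy. No extremal construction is required here, so this part is the most routine.
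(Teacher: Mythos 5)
Your proposal is correct, and its skeleton is the paper's: the same reduction through Theorem \ref{theorem:limitedLiability} into the two subproblems, the same reading of ``binding'' via finiteness of $C_1(p)$, and the same use of Lemma \ref{lemma:convexOptimization} to produce the extremal $\varphi^*_{1,\alpha}(x)=-\left(\alpha q(x)/\gamma_R\right)^{1/(\gamma_R-1)}$ with $\alpha$ tuned to activate the utility constraint, yielding exactly the paper's closed form $C_1(p)=-(-L)^{1/\gamma_R}I_1(0,p)^{(\gamma_R-1)/\gamma_R}$ with $I_1(0,p)=\int_0^p q(x)^{\gamma_R/(\gamma_R-1)}\,\ed x$. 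You genuinely depart in two places, both defensibly. First, where the paper computes the exact optima of both subproblems via the conjugate extremal (its \eqref{eq:utilityBound} is precisely $C_2^{\gamma_I}I_2^{1-\gamma_I}$), you substitute H\"older inequalities with conjugate exponents; since the Lemma's extremal is exactly the equality case of those H\"older inequalities, your one-sided bounds reproduce the paper's constants, and for the ``difficult to satisfy'' part only the upper bound is needed anyway, because the theorem asserts only the ``if'' direction (the paper's proof incidentally gets ``if and only if''). Second, in the divergent case $I_1(0,p)=\infty$ the paper takes $\varphi^*_{\alpha^*}{\mathbbm 1}_{[a,b]}$ and lets $a\to 0$, whereas you cap the extremal near the singularity; your version is actually the cleaner construction, since the paper's indicator-truncated function is neither increasing nor strictly negative as literally written, while the capped function $f_1(x)=\max\{\varphi^*_{\alpha}(x),\varphi^*_{\alpha}(a)\}$ is admissible for \eqref{leftOptimizationProblem}. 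Finally, the step you flag as ``genuinely delicate'' is not delicate at all: monotonicity of the extremal is automatic because $q$ is decreasing, so $g\propto q^{1/(\gamma_R-1)}$ is decreasing and $f_1=-g$ is increasing; and retuning the multiplicative constant so that the capped profile satisfies $\int_0^p u_R(f_1)\,\ed x=L$ gives, using $\int_0^a q\geq a\,q(a)$, a cost at most
\[
-(-L)^{1/\gamma_R}\left(a\,q(a)^{\frac{\gamma_R}{\gamma_R-1}}+\int_a^p q(x)^{\frac{\gamma_R}{\gamma_R-1}}\,\ed x\right)^{\frac{\gamma_R-1}{\gamma_R}}\longrightarrow-\infty
\quad\text{as } a\to 0,
\]
so no sharpness-of-H\"older argument under the monotonicity constraint is required.
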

\begin{proof}
In this case $u_R$ is smooth with derivative
\[
u_R^\prime(x) = \gamma_R(-x)^{\gamma_R-1}.
\]
We define $i_1(y)=((u_R)^\prime)^{-1}(y):[0,\infty)\to (-\infty,0]$.  So
\[
i_1(y)=-\left(\frac{y}{\gamma_R}\right)^\frac{1}{\gamma_R-1}.
\]
Given $\alpha>0$ we define $\varphi^*_{1,\alpha}(x)=i_1(\alpha q(x))$.  By Lemma \ref{lemma:convexOptimization}, $\varphi^*_{1,\alpha}$ is a solution of the problem:
\begin{equation}
\begin{aligned}
\underset{\varphi \in F}{\inf} & \quad \int_a^b \varphi(x) q(x) \, \ed x \\
\text{subject to} & \int_a^b u_R(\varphi(x)) \, \ed x \geq L(\alpha,a,b).
\end{aligned}
\label{abOptimization}
\end{equation}
where $0\leq a<b \leq 1$ and
\[
\begin{split}
L(\alpha,a,b) &:= \int_a^b u_R(\varphi_{1,\alpha}^*(x)) \ed x \\
&= -\int_a^b \left( \left( \frac{\alpha q(x)}{\gamma_R} \right)^{\frac{1}{\gamma_R-1}} \right)^{\gamma_R} \, \ed x \\
&= -\left( \frac{\alpha}{\gamma_R} \right)^\frac{\gamma_R}{\gamma_R-1}\int_a^b
q(x)^\frac{\gamma_R}{\gamma_R-1} \, \ed x. \\
\end{split}
\]
The optimum value
of \eqref{abOptimization} is given by
\[
\begin{split}
C_1(\alpha,a,b)
&:=\int_a^b q(x) i_1(\alpha q(x))\,\ed x \\
&=\int_a^b -q(x) \left( \frac{\alpha q(x)}{\gamma_R} \right)^\frac{1}{-1 + \gamma_R} \,\ed x \\
&=-\left( \frac{\alpha}{\gamma_R} \right)^{\frac{1}{\gamma_R-1}} \int_a^b q(x)^{\frac{\gamma_R}{\gamma_R-1}} \,\ed x.
\end{split}
\]

Let us write
\[
I_1(a,b):=\int_a^b q(x)^\frac{\gamma_R}{\gamma_R-1} \, ed x
\]
To solve ensure $L(\alpha,a,b)=L$ we must take
as $\alpha$
\[
\alpha^* = \gamma \left( \frac{-L}{I_1(a,b)}
\right)^\frac{\gamma_R-1}{\gamma_R}
\]
which we note is finite and is greater than $0$ whenever $I_1(a,b)$ is finite.

We compute that 
\[
\begin{split}
C_1(\alpha^*,a,b)&= - \left(
\left( \frac{-L}{I_1(a,b)}
\right)^\frac{\gamma_R-1}{\gamma_R}
\right)^{\frac{1}{\gamma_R-1}} I_1(a,b) \\
&= - (-L)^{\frac{1}{\gamma_R}} I_1(a,b)^\frac{\gamma_R-1}{\gamma_R}.
\end{split}
\]

If $I_1(0,p)$ is finite it follows
from Lemma \ref{lemma:convexOptimization}
that the $C_1$ of Theorem \ref{theorem:limitedLiability} takes the value
\begin{equation}
C_1(p) = -(-L)^{\frac{1}{\gamma_R}} I_1(0,p)^\frac{\gamma_R-1}{\gamma_R}.
\label{eq:c1Bound}
\end{equation}
If the investor is difficult to satisfy, it follows that the constraint is binding.

On the other hand if $I_1(0,p)$ is infinite,
we may take $\varphi_1(x)=\varphi^*_{\alpha^*}(x) {\mathbbm 1}_{[a,b]}(x)$ to find a function satisfying the constraints of problem
\eqref{leftOptimizationProblem} with
objective value $C(\alpha^*,a,b)$. Since
this tends to $-\infty$ as $a\to0$ we deduce
that
\[
C_1(p)=-\infty
\]
if $I_1(0,p)$ is infinite. Since $u_I$ is increasing we can achieve arbitrary large utilities below $\sup u_I$ given sufficient cash. Hence the constraint is not binding in this case.

To determine when the investor is easily satisfied we solve the optimization problem \eqref{rightOptimizationProblem}. We define
\[
i_2(y):=((u_I)^\prime)^{-1}(y)=\left(\frac{y}{\gamma_I}\right)^\frac{1}{\gamma_I-1}.
\]
We now define $\varphi_{2,\alpha}^*=i_2(\alpha(q(x))$ for $\alpha>0$.

 By Lemma \ref{lemma:convexOptimization}, $\varphi^*_{2,\alpha}$ is a solution of the problem:
\begin{equation}
\begin{aligned}
\underset{\varphi \in F}{\sup} & \quad \int_a^b u_I( \varphi(x)) \, \ed x \\
\text{subject to} & \int_a^b \varphi(x) q(x) \, \ed x \leq C_2(\alpha,a,b).
\end{aligned}
\label{abOptimizationRight}
\end{equation}
where $0\leq a<b \leq 1$ and
\[
\begin{split}
C_2(\alpha,a,b) &:= \int_a^b \varphi_{2,\alpha}^*(x) q(x) \ed x \\
&= \int_a^b \left(\frac{\alpha q(x)}{\gamma_I}\right)^\frac{1}{\gamma_I-1} q(x) \ed x \\
&= \int_a^b \left(\frac{\alpha}{\gamma_I}\right)^\frac{1}{\gamma_I-1} q(x)^\frac{\gamma_I}{\gamma_I-1} \ed x. \\
\end{split}
\]
We define $I_2(a,b)=\int_a^b q(x)^\frac{\gamma_I}{\gamma_I-1} \ed x$ so we have
\begin{equation}
C_2(\alpha,a,b) = \left(\frac{\alpha}{\gamma_I}\right)^\frac{1}{\gamma_I-1} I_2(a,b).
\end{equation}
The corresponding supremum of \eqref{abOptimizationRight} is then given by
\begin{equation}
\begin{split}
u(\alpha,a,b)&:=
\int_a^b \left( \frac{\alpha q(x)}{\gamma_I} \right)^\frac{\gamma_I}{\gamma_I-1} \ed x \\
&=
\left( \frac{\alpha}{\gamma_I}\right)^\frac{\gamma_I}{\gamma_I-1} I_2(a,b) \\
&= \left(\frac{C_2(\alpha,a,b)}{I_2(a,b)}\right)^{\gamma_I} I_2(a,b) \\
&= C_2(\alpha,a,b)^{\gamma_I} I_2(a,b)^{1-\gamma_I} \\
\end{split}
\label{eq:utilityBound}
\end{equation}

We deduce that the investor is difficult to satisfy if and only if
\[
I_2(a,b)
\]
is finite.
\end{proof}

We now summarize the key findings of the paper in a single theorem.

\begin{theorem}
\label{thm:summary}
Suppose that $(\Omega,{\cal F}, \P)$ and
$\frac{\ed \Q}{\ed \P}$ define a complete market. Define
\begin{equation}
e(\gamma):=\E_\P\left( \frac{\ed \Q}{\ed \P}^\frac{\gamma}{\gamma-1} \right).
\label{eq:integralQGamma}
\end{equation}
An investor with S-shaped utility function $u_I$ 
who is subject only to expected shortfall constraints can find a sequence of portfolios satisfying these constraints whose expected $u_I$-utility tends to infinity. If in addition the investor is difficult to satisfy and $u_R$ is the function
\begin{equation}
u_R = \begin{cases}
-(-x)^{\gamma_R} & x \leq 0 \\
0 & \text{otherwise} \\
\end{cases}
\label{eq:weakUtilityFunction}
\end{equation}
with $\gamma_R>1$ and $e(\gamma_R)$ finite, then any sequence of portfolios whose expected $u_I$-utility tends to infinity will have
expected $u_R$ utility tending to $-\infty$.

The conditions that $e(\gamma_R)$ is finite and that the investor is difficult to satisfy
are always satisfied for the market of derivatives on the Black--Scholes--Merton market described in Section
\ref{section:blackScholesDiscrete} when the market price of risk is non-zero
and the investor utility function $u_I$ is not bounded above.
\end{theorem}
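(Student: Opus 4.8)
The statement bundles three assertions, and I would establish them in sequence, the central implication being the crux. For the \emph{existence of a diverging sequence} I would appeal directly to Theorem~\ref{thm:ESNonBinding}: an S-shaped $u_I$ is by definition risk-seeking in the left tail, and in the market at hand $\lim_{x\to0}q(x)=\infty$, so the optimal value under the expected shortfall constraint equals $\sup u_I$. As $u_I$ is assumed unbounded above, $\sup u_I=\infty$, and the digital payoffs built in the proof of Theorem~\ref{thm:ESNonBinding} furnish a sequence satisfying the budget and ES constraints whose expected $u_I$-utility diverges.

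For the \emph{implication} I would argue by contradiction, using the budget constraint as the only channel coupling the two disjoint utility regions. By Theorem~\ref{theorem:rearrangement} I may take any such sequence in increasing form $\varphi_n:[0,1]\to\R$ with $\int_0^1\varphi_n(x) q(x)\,\ed x\le e^{rT}C=:\tilde C$ and $\int_0^1 u_I(\varphi_n(x))\,\ed x\to\infty$. Suppose, for contradiction, that $\int_0^1 u_R(\varphi_n)\,\ed x\ge L$ along a subsequence. Setting $g_n=-\varphi_n\ge 0$ on the loss region $\{\varphi_n<0\}$, this hypothesis reads $\int g_n^{\gamma_R}\,\ed x\le -L$. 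Hölder's inequality with conjugate exponents $\gamma_R$ and $\gamma_R/(\gamma_R-1)$, together with the law-invariance identity $\int_0^1 q(x)^{\gamma_R/(\gamma_R-1)}\,\ed x=e(\gamma_R)$ arising from $q(U)\stackrel{d}{=}\ed\Q/\ed\P$ in Theorem~\ref{theorem:rearrangement}(i), then bounds the ``refund'' generated on the loss region: $\int g_n q\,\ed x\le(-L)^{1/\gamma_R}e(\gamma_R)^{(\gamma_R-1)/\gamma_R}=:K<\infty$, where finiteness is precisely the hypothesis $e(\gamma_R)<\infty$. Splitting the budget into gain and loss contributions yields $\int_{\{\varphi_n>0\}}\varphi_n q\,\ed x\le\tilde C+K=:C_2^{\max}$, a finite bound on the spend available on the profitable side.

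The final step bounds the attainable gain-side utility and closes the contradiction. Extending the gain part of $\varphi_n$ by $0$ over the loss region gives an increasing nonnegative payoff on $[0,1]$ with cost at most $C_2^{\max}$ and unchanged expected $u_I$-utility; hence $\int_0^1 u_I(\varphi_n)\,\ed x$ is bounded above by the supremum of the gain-side problem \eqref{rightOptimizationProblem} on the full interval at cost $C_2^{\max}$. Since the investor is difficult to satisfy this supremum is finite; for the power utilities it equals $(C_2^{\max})^{\gamma_I}e(\gamma_I)^{1-\gamma_I}<\infty$ by \eqref{eq:utilityBound}, using $e(\gamma_I)<\infty$. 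This contradicts $\int_0^1 u_I(\varphi_n)\,\ed x\to\infty$, so $\int_0^1 u_R(\varphi_n)\,\ed x\to-\infty$. The Black--Scholes verification is then routine: with nonzero market price of risk $\ed\Q/\ed\P$ is lognormal, so it has finite moments of every positive and negative order, whence $e(\gamma_R),e(\gamma_I)<\infty$ for $\gamma_R>1$ and $\gamma_I\in(0,1)$, giving both ``difficult to satisfy'' (via Theorem~\ref{thm:gammaUtility}) and $e(\gamma_R)<\infty$.

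\emph{Main obstacle.} The delicate point is the uniformity in the last step: ``difficult to satisfy'' is stated pointwise in the crossing level $p\in(0,1)$, yet I need a single finite gain-side bound valid as the loss region of $\varphi_n$ shrinks, that is as the effective $p\to0$. I would resolve this by the monotonicity $V(p;C_2)\ge V(p';C_2)$ for $p\le p'$ (extend a feasible payoff by $0$ to the left, which preserves both cost and utility since $u_I(0)=0$), reducing the required bound to finiteness of the full-interval value $\lim_{p\to0}V(p;C_2)$; for the power $u_I$ this limit equals $(C_2)^{\gamma_I}e(\gamma_I)^{1-\gamma_I}$ and is finite precisely because $e(\gamma_I)<\infty$, which is what makes the argument go through.
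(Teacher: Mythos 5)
Your proposal is correct in substance, but it follows a genuinely more self-contained route than the paper. The paper's own proof is essentially an assembly of prior results: it invokes Theorem \ref{thm:ESNonBinding} for the diverging sequence, and for the central implication it cites Theorem \ref{theorem:limitedLiability} (the decomposition into the loss-side problem \eqref{leftOptimizationProblem} and gain-side problem \eqref{rightOptimizationProblem} with the line search over the crossing point $p$) together with Theorem \ref{thm:gammaUtility}, where the key bound $C_1(p)=-(-L)^{1/\gamma_R} I_1(0,p)^{(\gamma_R-1)/\gamma_R}$ in \eqref{eq:c1Bound} is obtained from the subdifferential optimality condition of Lemma \ref{lemma:convexOptimization} applied to the exact optimizer $i_1(\alpha q(x))$. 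You bypass that machinery: your H\"older inequality with exponents $\gamma_R$ and $\gamma_R/(\gamma_R-1)$, combined with $\int_0^1 q^{\gamma_R/(\gamma_R-1)}\,\ed x = e(\gamma_R)$ from Theorem \ref{theorem:rearrangement}(i), reproduces exactly the same refund bound $K=(-L)^{1/\gamma_R}e(\gamma_R)^{(\gamma_R-1)/\gamma_R}$ (H\"older is precisely the dual statement of the paper's exact computation), and your contradiction argument then needs only a single uniform cost $C_2^{\max}$ rather than the pointwise-in-$p$ function $C_2(p)$ of \eqref{eq:c2Definition}. For Black--Scholes you replace the paper's explicit Gaussian-integral verification of \eqref{eq:eGammaExplicit} by the observation that a lognormal $\ed\Q/\ed\P$ has finite moments of all real orders; this is cleaner and equivalent. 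A notable merit of your write-up is that you identify and address the uniformity-in-$p$ issue (``difficult to satisfy'' is stated for $p\in(0,1)$, but the argument needs a single bound as the loss region shrinks), which the paper's one-line proof silently elides.

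Two hairline points you should tidy. First, your resolution of the uniformity obstacle is complete only for the power utility $u_I(x)=x^{\gamma_I}$, where \eqref{eq:utilityBound} gives the full-interval value $(C_2)^{\gamma_I}e(\gamma_I)^{1-\gamma_I}$; for a general difficult-to-satisfy $u_I$ you assert finiteness of the $p\to0$ limit without proof. It can be supplied in one line: any feasible $f_2$ on $[0,1]$ with cost $\leq C_2$ satisfies $f_2(p)\leq C_2/\int_p^1 q(x)\,\ed x$, whence $\int_0^1 u_I(f_2)\,\ed x \leq p\,u_I\bigl(C_2/\int_p^1 q(x)\,\ed x\bigr) + V(p;C_2) < \infty$ for any fixed $p\in(0,1)$; alternatively, the paper's intended reading (via the ``risk-averse in the right tail'' property of S-shaped $u_I$) dominates $u_I$ by a power utility with exponent $\eta\in(0,1)$ and reduces to your computed case. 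Second, when you extend the gain part by zero you say the expected $u_I$-utility is ``unchanged''; for an S-shaped (rather than limited-liability) $u_I$, which may be negative on the loss region, it is only ``not decreased'' --- but that is the direction your upper bound requires, so the argument survives as the paper's own remark following Theorem \ref{theorem:limitedLiability} anticipates.
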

\begin{proof}
Apart from the assertions about the Black--Scholes--Merton market, the result follows from Theorem \ref{theorem:limitedLiability},
Theorem \ref{thm:ESNonBinding} and our formal definition of S-shaped utility.
	
In the Black--Scholes--Merton market, the expectation $e(\gamma)$ is equal to
\[
\int_{\R} \left( \frac{q^{BS}_{s_T}(x)}{p^{BS}_{s_T}(x) }\right)^\frac{\gamma}{\gamma-1} p^{BS}_{s_T}(x) \, \ed x
\]
where $p^{BS}$ and $q^{BS}$ are given by equations \eqref{eq:bsPDensity} and
\eqref{eq:bsQDensity} respectively. On substituting in these formulae for $p^{BS}$ and $q^{BS}$ one obtains
\begin{equation}
e(\gamma) = \int_\R \frac{1}{\sigma  \sqrt{2 \pi  T}}\exp \left(\frac{c_0+c_1 x+c_2 x^2}{8 (\gamma -1) \sigma ^2 T}\right)\, \ed x 
\label{eq:eGammaExplicit}
\end{equation}
with
\[
\begin{split}
c_0 &= T^2 \left(-\gamma  \sigma ^4+4 \mu ^2-4 \mu  \sigma ^2-4 \gamma  r^2+4 \gamma  r \sigma ^2+\sigma ^4\right) \\
&\quad {}-4 s_0 T \left(-\gamma  \sigma ^2-2 \mu +2 \gamma  r+\sigma ^2\right)-4 (\gamma -1) s_0^2, \\
c_1 &= 4 \left(T \left(-\gamma  \sigma ^2-2 \mu +2 \gamma  r+\sigma ^2\right)+2 (\gamma -1) s_0\right), \\
c_2 &= 4-4 \gamma.
\end{split}
\]
The overall coefficient of $x^2$ in the exponential in our expression \eqref{eq:eGammaExplicit} for $e(\gamma)$ is
\[
\frac{4-4\gamma}{8 (\gamma-1) \sigma^2 T} = -\frac{1}{2 \sigma^2 T}.
\]
This is always negative and so the expression \eqref{eq:eGammaExplicit} is a Gaussian integral and hence is finite.

It now follows from Theorem \ref{thm:gammaUtility} that since $u_I$ is risk-averse on the right and also unbounded on the right that the investor is difficult to satisfy.
\end{proof}

\section{Conclusions}

We have shown that in typical complete markets with non-zero market price of risk, expected shortfall constraints do not affect the supremum of the investor utility that can be achieved by an investor with S-shaped utility, $u_I$. By contrast, even very weak expected utility constraints for a concave increasing limiting utility function can reduce the supremum that can be achieved.
In these circumstances, if a risk manager with such a concave increasing utility function, $u_R$, only imposes expected shortfall constraints, they should expect that a rogue investor will choose investment strategies with infinitely bad $u_R$-utilities.  These findings were stated in full detail in Theorem \ref{thm:summary}.
	
We believe that this shows that in complete markets value at risk or expected shortfall constraints alone are insufficient to constrain the behaviour of rogue investors.

An obvious criticism of our approach is that the complete market assumption is unrealistic. In many situations markets can be well approximated by complete markets, but they are a mathematical idealisation. In particular, our result that the strategies pursued are {\em infinitely} bad when measured using expected utilities will clearly fail in realistic markets. One expects that in practice the infinitely bad expected $u_R$-utilities will merely be very bad utilities. We will investigate this question numerically in future research.

Nevertheless, even if one believes that further research will reveal some features of realistic market models that significantly blunt our findings, it surely behoves those risk-managers who are willing to rely on expected shortfall constraints to explain what these features are, and to demonstrate the effectiveness of their risk-management constraints. 

\bibliography{sshaped}
\bibliographystyle{plain}

\newpage

\section*{Appendix}

\appendix

\section{Proof of Theorem \ref{theorem:rearrangement}}
\label{appendix:rearrangementProof}

 To prove Theorem \ref{theorem:rearrangement}, we define the notion of $X$-rearrangement.

\begin{definition}
	Given random variables $X, f \in L^0(\Omega, \R)$ with $X$ having a continuous
	distribution we define the {\em $X$-rearrangement of $f$}, denoted $f^X$ by:
	\[
	f^X(\omega) = F_f^{-1}( \P(X\leq X(\omega) ) ) = F_f^{-1}( F_X(X(\omega))) .
	\]
\end{definition}

\begin{lemma}
	\label{lemma:basicPropertiesOfRearrangement}
	The $X$-rearrangement has the following properties:
	\begin{enumerate}[(i)]
		\item If $X$ has a continuous probability distribution then $f^X$ is equal to $f$ in distribution.
		\item If $k \in \R$ then $(\max\{f,k\})^X = \max\{f^X, k\}$ and
		$(\min\{f,k\})^X = \min\{f^X, k\}$
		\item $f^X = (f^+)^X + (f^-)^X$.
		\item $X^X = X$ almost surely.
		\item If $g(\omega)=G(X(\omega))$ with $G$ increasing and if $X$ has a continuous probability distribution then $g^X=g$ almost surely. 
	\end{enumerate}
\end{lemma}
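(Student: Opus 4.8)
The plan is to prove the five properties in roughly the stated order, building everything on two elementary facts about the generalised inverse $F_f^{-1}(p) = \inf\{x : F_f(x) \geq p\}$ that I would record first: the monotonicity of $F_f^{-1}$, and the \emph{Galois inequality} $F_f^{-1}(p) \leq x \iff p \leq F_f(x)$, which holds because $F_f$ is non-decreasing and right-continuous. With the Galois inequality in hand, property (i) is the usual probability integral transform: since $X$ has a continuous distribution, $U := F_X(X)$ is uniform on $[0,1]$, and then for the rearranged variable $f^X = F_f^{-1}(U)$ one computes $\P(f^X \leq x) = \P(U \leq F_f(x)) = F_f(x)$, so $f^X$ has the same law as $f$.

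For (ii) I would compute the law of the truncation directly. Writing $g = \max\{f,k\}$, one has $F_g(x) = 0$ for $x < k$ and $F_g(x) = F_f(x)$ for $x \geq k$; feeding this into the definition of the generalised inverse gives $F_g^{-1}(p) = \max\{k, F_f^{-1}(p)\}$ for every $p \in (0,1]$, whence $(\max\{f,k\})^X = F_g^{-1}(F_X(X)) = \max\{k, f^X\}$. The statement for $\min$ is proved symmetrically. Property (iii) is then an immediate corollary of (ii) taken at $k = 0$: decomposing $f = f^+ + f^-$ with $f^+ = \max\{f,0\}$ and $f^- = \min\{f,0\}$, part (ii) gives $(f^+)^X = \max\{f^X,0\}$ and $(f^-)^X = \min\{f^X,0\}$, and summing recovers $f^X$ through the pointwise identity $\max\{y,0\} + \min\{y,0\} = y$.

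For (iv) the key observation is that $F_X^{-1}(F_X(x)) \leq x$ always, with strict inequality only when $F_X$ is constant on an interval $[t,x]$, i.e. only when $x$ lies in a maximal flat stretch of $F_X$. Each such maximal flat interval carries zero $X$-mass, there are at most countably many of them, and $X$ is continuous, so $\P\big(F_X^{-1}(F_X(X)) \neq X\big) = 0$, giving $X^X = X$ almost surely. Finally, for (v) I would argue that $g^X$ is by construction an increasing function of $X$ (it equals $(F_g^{-1}\circ F_X)(X)$ with $F_g^{-1}\circ F_X$ non-decreasing), while by (i) it has the same law as $g = G(X)$, which is also an increasing function of $X$. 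Two increasing functions of the continuous variable $X$ sharing the same law must coincide $X$-almost everywhere—this is precisely the uniqueness of the increasing (quantile) representation—so $g^X = g$ almost surely. Alternatively one can push the composition through directly as $F_{G(X)}^{-1}(F_X(x)) = G(x)$ and invoke (iv).

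The main obstacle I anticipate is the careful bookkeeping with generalised inverses when the distribution functions involved have both jumps and flat stretches, which is where the continuity hypothesis on $X$ is genuinely used. This is most acute in (v): establishing the composition identity $F_{G(X)}^{-1} = G \circ F_X^{-1}$ for a general increasing $G$ requires controlling the interaction of the jumps of $G$ with the (measure-zero) flat regions of $F_X$, and it is to sidestep this that I prefer the law-plus-monotonicity uniqueness argument given above. By comparison, the truncation computation in (ii) and the flat-region argument in (iv) are routine once the Galois inequality has been stated cleanly.
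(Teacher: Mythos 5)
Your proof is correct. For (i)--(iii) it is essentially the paper's own argument: the same Galois inequality $F_f^{-1}(p)\le x \iff p\le F_f(x)$, the same direct computation of $F^{-1}_{\max\{f,k\}}$ and $F^{-1}_{\min\{f,k\}}$ (your restriction to $p\in(0,1]$ is if anything slightly more careful than the paper's manipulation, which degenerates at $p=0$), and the same $k=0$ specialisation for (iii). Where you genuinely diverge is in (iv) and (v). For (iv) the paper runs a considerably heavier argument: it covers the exceptional set $\{F_X^{-1}F_X X\neq X\}$ by the countably many discontinuities of $F_X^{-1}$ and exhibits each piece as a difference $V^j_i\setminus L_i$ of sets of equal measure via the identity $F_XF_X^{-1}F_X=F_X$; notably, that argument never uses continuity of the law of $X$. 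Your flat-stretch argument --- strict inequality $F_X^{-1}(F_X(x))<x$ forces $x$ into one of countably many maximal intervals of constancy of $F_X$, each of which carries zero mass --- is shorter and equally valid; one pedantic point is that without continuity the left endpoint of a flat stretch can be an atom, so it is really the half-open interval that is null, but equality holds at that endpoint anyway, and since continuity of $X$ is built into the definition of the $X$-rearrangement the extra generality of the paper's proof is unused. For (v) the paper introduces the right-continuous modification $\tilde{G}$ and a sup-type inverse $G^{-1}$ satisfying the dual identity $\tilde{G}(x)\le y \iff x\le G^{-1}(y)$, then computes $g^X$ explicitly line by line; you instead observe that $g^X=(F_g^{-1}\circ F_X)(X)$ and $g=G(X)$ are two nondecreasing functions of $X$ with the same law (by (i)) and invoke a.s.\ uniqueness of the increasing representation. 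That is sound, but be aware the uniqueness lemma is not free: its standard proof is precisely the bookkeeping you hoped to sidestep --- one uses (iv) to write $\phi(X)=(\phi\circ F_X^{-1})(F_X(X))$ a.s., then the countability of the discontinuities of the nondecreasing function $\phi\circ F_X^{-1}$ together with the uniformity of $F_X(X)$ to identify it a.s.\ with $F^{-1}_{\phi(X)}(F_X(X))$ --- so a fully self-contained write-up would expand to roughly the paper's computation, or to your fallback identity $F^{-1}_{G(X)}\circ F_X=G$ a.e.\ with the jump/flat interaction handled by a $\tilde{G}$-type device. You flag this dependence honestly, and since the fact is standard and citable your structure stands: it buys brevity and conceptual transparency, while the paper's route buys self-containedness (and, in (iv), tolerance of atoms).
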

\begin{proof}[Proof of (i)]
	We recall that: for any distribution function $F$ with generalized inverse $F^{-1}$,  $F^{-1}(p)\leq x$ if and only if $p \leq F ( x )$;    
	$F_X\circ F^{-1}_X=\text{id}$
	if $X$ has a continuous distribution. Hence if $X$ has a continuous distribution:
	\[
	\begin{split}
	F_{f^X}(y) &= \P(f^X(\omega)\leq y) \\
	&= \P( F^{-1}_f( \P( X \leq X(\omega) )\leq y) \\
	&= \P( \P(X \leq X(\omega) ) \leq F_f(y) ) \\
	&= \P( F_X(X(\omega) )\leq F_f(y) ) \\
	&= \P( X(\omega) \leq F^{-1}_X F_f(y) ) \\
	&= F_X(F^{-1}_X(F_f(y))) \\
	& = F_f(y). \\
	\end{split}
	\]
\end{proof}
\begin{proof}[Proof of (ii)] 
	The result follows from the definition of $f^X$ and the following identities:
	\[
	\begin{split}
	F^{-1}_{\max\{f,k\}}(t)&=\inf\{z \in \R: \P(\max\{f,k\} \leq z) \geq t \} \\
	&=\inf\{z \in \R: \P(f\leq z \text{ and }k \leq z) \geq t \} \\
	&=\max\{ \inf\{z \in \R: \P(f\leq z) \geq t \}, k \} \\
	&=\max\{ F^{-1}_f(t), k \}.
	\end{split}
	\]
	\[
	\begin{split}
	F^{-1}_{\min\{f,k\}}(p)&=\inf\{z \in \R: \P(\min\{f,k\} \leq z) \geq p \} \\
	&=\inf\{z \in \R: \P(f\leq z \text{ or }k \leq z) \geq p \} \\
	&=\min\{ \inf\{z \in \R: \P(f\leq z) \geq p \}, k \} \\
	&=\min\{ F^{-1}_f(p), k \}.
	\end{split}
	\]
\end{proof}
\begin{proof}[Proof of (iii)]
	We use (ii) to derive the following identity
	\[
	\begin{split}
	F^{-1}_f(p) &= (F^{-1}_f)^+(p) + (F^{-1}_f)^-(p) \\
	&= \max\{F^{-1}_f(p),0\} + \min\{F^{-1}_f(p),0\} \\
	&= F^{-1}_{\max\{f,0\}}(p) + F^{-1}_{\min\{f,0\}}(p) \\
	&= F^{-1}_{f^+}(t) + F^{-1}_{f^-}.
	\end{split}
	\]
	The result now follows from the definition of $f^X$.
\end{proof}
\begin{proof}[Proof of (iv)]
	We wish to prove that the set 	
	\[A=\{ W \in \Omega: F^{-1}_X F^X X(W) \neq X(W) \}\]
	is null.

	We recall that
	\begin{equation}
	F^{-1}_X F_X( x) \leq x \text{ and } F_X F^{-1}_X(p) \geq p
	\label{eq:ffInvInequalities}
	\end{equation}
	for all $x\in \R$ and $p\in[0,1]$. We note that since $F_X$ is increasing this first inequality
	implies that
	\[
	F_X (F^{-1}_X F_X( x )) \leq F_X(x) 
	\]
	and the second implies
	\[
	F_X F^{-1}_X (F_X( x )) \geq F_X(x).
	\]
 	We deduce
	\begin{equation}
	F_X F^{-1}_X F_X( x ) = F_X(x).
	\label{eq:tripleFEquality}
	\end{equation}
	
	Suppose that $F^{-1}_X$ is continuous at $F_X X(W) \in [0,1]$ then
	\[
	\begin{split}
	F^{-1}_X F_X X(W) 
	&= \inf \{ F^{-1}_X(q) \mid q>F_X X(W) \} \\
	&= \inf \{ \inf \{ x \mid F_X(x)\geq q \} \mid q>F_X X(W)  \} \\
	&= \inf \{ x \mid F_X(x)>F_X X(W) \} \\
	&\geq X(W). \\
	\end{split}
	\]
	But by \eqref{eq:ffInvInequalities}, $F^{-1}_X F_X X (W) \leq X(W)$
	for all $W\in \Omega$. So if $F^{-1}_X$ is continuous at $F_X X(W)$ then
	$F^{-1}_X F_X X(W)=X(W)$, so $W \notin A$. Let $P$ denote the set of discontinuities of $F^{-1}_X$. We have shown:
	\[
	A \subseteq \bigcup_{p \in P} \{ \omega \mid F^{-1}_X F_X X(\omega) \neq X(\omega) \text{ and } F_X X(\omega)=p \}.
	\]
	Since $F^{-1}_X$ is monotone, $P$ is countable. Thus we can find a countable
	set $\{\omega_1, \omega_2, \ldots\}$ of elements of $\Omega$ such that
	\begin{equation}
	\begin{split}
	A &\subseteq \bigcup_{\omega_i} \{ \omega \mid F^{-1}_X F_X X(\omega) \neq X(\omega) \text{ and } F_X X(\omega)=F_X X(\omega_i) \} \\
	&= \bigcup_{\omega_i} \{ \omega \mid F^{-1}_X F_X X(\omega_i) \neq X(\omega) \text{ and } F_X X(\omega)=F_X X(\omega_i) \} \\
	&= \bigcup_{\omega_i} A_i \\
	\end{split}
	\label{eq:aAsUnion}
	\end{equation}
	where
	\begin{equation}
	\begin{split}
	A_i:&= \{ \omega \mid F^{-1}_X F_X X(\omega_i) \neq X(\omega) \text{ and } F_X X(\omega)=F_X X(\omega_i) \} \\
	&= \{ \omega \mid F_X X(\omega) = F_X X(\omega_i) \}
	\setminus \{ \omega \mid F^{-1}_X F_X X(\omega_i) = X(\omega) \}.
	\end{split}
	\label{eq:aAsComplement}
	\end{equation}
	We now note that
	\begin{multline}
	\{ \omega \mid F_X X(\omega) = F_X X(\omega_i) \}
	=  \\
	\{ \omega \mid F_X X(\omega) \leq F_X X(\omega_i) \}
	\setminus
	\{ \omega \mid F_X X(\omega) < F_X X(\omega_i) \}
	\label{eq:firstSetAsComplement}
	\end{multline}
	and
	\begin{multline}
	\{ \omega \mid F^{-1}_X F_X X(\omega_i) = X(\omega) \}
	= \\
	\{ \omega \mid X(\omega) \leq F^{-1}_X F_X X(\omega_i)  \}
	\setminus
	\{ \omega \mid X(\omega) < F^{-1}_X F_X X(\omega_i)  \}.
	\label{eq:secondSetAsComplement}
	\end{multline}
	Now
	\begin{equation}
	\begin{split}
	X(\omega)<F^{-1}_X F_X X(\omega_i)
	&\implies
	X(\omega) < \inf\{ x : F_X(x) \geq F_X X(\omega_i) \} \\
	&\implies F_X X(\omega)< F_X X(\omega_i).
	\end{split}
	\label{eq:ffimplication1}
	\end{equation}
	Conversely we can use \eqref{eq:tripleFEquality} to see that
	\begin{equation}
	\begin{split}
	F_X X (\omega)<F_x X(\omega_i)
	&\implies F_X X(\omega)<F_X F_X^{-1} F_X X(\omega_i) \\
	&\implies X(\omega)<F_X^{-1} F_X X(\omega_i)
	\end{split}
	\label{eq:ffimplication2}
	\end{equation}
	since $F_X$ is increasing. Together \eqref{eq:ffimplication1}
	and \eqref{eq:ffimplication2} imply
	\begin{equation}
	\{ \omega \mid F_X X(\omega) < F_X X(\omega_i) \} =
	\{ \omega \mid X(\omega) < F^{-1}_X F_X X(\omega_i)  \}.
	\label{eq:complementEquality}
	\end{equation}
	We use \eqref{eq:firstSetAsComplement}, \eqref{eq:secondSetAsComplement}
	and \eqref{eq:complementEquality} to rewrite \eqref{eq:aAsComplement} as
	\begin{equation}
	\begin{split}
	A_i = \{ \omega \mid F_X X(\omega) \leq F_X X(\omega_i) \}
	\setminus \{ \omega \mid X(\omega) \leq F^{-1}_X F_X X(\omega_i)  \}.
	\end{split}
	\label{eq:aAsPMinusP}
	\end{equation}
	Let $L_i=\{\omega \mid X(\omega) \leq F^{-1}_X F_X X(\omega_i) \}$.
	We use \eqref{eq:tripleFEquality} to compute that
	\begin{equation}
	\P( \omega \in L_i ) = F_X F^{-1}_X F_X X(\omega_i) = F_X X(\omega_i).
	\label{eq:sizeofL}
	\end{equation}
	
	Let $R_i=\{\omega \in \Omega \mid F_X X(\omega) = F_X X(\omega_i) \}$. We know $R_i$ is non empty since it contains $\omega_i$. Therefore we may
	choose a sequence $v^j_i$ in $R_i$ such that $X( v^j_i)$ is increasing and has limit equal to $\sup_{v \in R_i} X(v)$. Moreover if this supremum is obtained
	we may assume that the sequence $X(v^j_i)$ obtains its limit.
	
	\begin{equation}
	\begin{split}
	\{ \omega \mid F_X X(\omega)\leq F_X X(\omega_i) \}
	&= \bigcup_x \{ \omega \mid X(\omega) \leq x \text{ and } F_X(x) \leq F_X X(\omega_j) \} \\
	&= \bigcup_j \{ \omega \mid X(\omega) \leq X(v^j_i) \text{ and } F_X X(v^j_i) \leq F_X X(\omega_j) \} \\
	&= \bigcup_j \{ \omega \mid X(\omega) \leq X(v^j_i) \} \\
	&=\bigcup_j V^j_i
	\end{split}
	\label{eq:vjDefinition}
	\end{equation}
	Where $V^j_i:= \{ \omega \mid X(\omega) \leq X(v^j_i) \}$.
	We now compute that
	\begin{equation}
	\P(\omega \in V^j_i)=F_X X(v^j_i)=F_X X(\omega_i),
	\label{eq:sizeOfVj}
	\end{equation}
	since $v^j_i \in R_i$.
	
	Since $F^{-1}_X F_X X(\omega_i)=\inf \{X(\omega) \mid \omega \in R_i\} \leq X(v^j_i) $ we see that $L_i \subseteq V^j_i$. Hence $\P(\omega \in V^j_i \setminus L_i ) =
	\P(\omega \in V^j_i)-\P(\omega\in L_i )=0$, using \eqref{eq:sizeOfVj} and \eqref{eq:sizeofL}. By \eqref{eq:aAsUnion}, \eqref{eq:aAsPMinusP} and  \eqref{eq:vjDefinition} we have
	\[
	A \subseteq \bigcup_i \bigcup_j (V^j_i \setminus L_i).
	\]
	So $A_i$ is a countable union of null sets and hence is null.
\end{proof}
\begin{proof}[Proof of (v)]
	We define a generalized inverse for $G$ by
	\[
	G^{-1}(y) = \sup\{ x \in \R \mid G(x) \leq y \}.
	\]
	We define a function $\tilde{G}$ by
	\[
	\tilde{G}(x) = \inf\{ G(x^\prime) \mid x^\prime \geq x \}.
	\]
	We see that $\tilde{G}(x)=G(x)$ except possibly at the discontinuities of $G$.
	
	We note that
	\begin{equation}
	\begin{split}
	\tilde{G}(x) \leq y
	&\iff \inf \{ G(x^\prime) \mid x^\prime \geq x \} \leq y \\
	&\iff \exists x^\prime \text{ with } G(x^\prime)\leq y \text{ and }x^\prime \geq x \\
	&\iff x \leq \sup \{ x^\prime \mid G(x^\prime) \leq y \} \\
	&\iff x \leq G^{-1}(y).
	\label{eq:generalizedInverseIdentity}
	\end{split}
	\end{equation}
	
	We define $\tilde{g}(\omega)=\tilde{G}X^X(\omega)$. $G$ is monotone so only has
	a countable number of discontinuities. Let $D$ denote the set of discontinuities of $G$. Then $\tilde{G}(x)=G(x)$ unless $x \in D$. So
	the set of $\omega$ for which $\tilde{g}(\omega)\neq g(\omega)$ is
	contained in $X^{-1}(D) \cup A$ where $A$ is the null set defined in (iv). By the continuity of the distribution of $X$,
	$X^{-1}(x)$ is a null set for all $x$. Hence $\tilde{g}=g$ almost surely.
	
	We now wish to calculate $g_X(W)$ for $W \in \Omega$. In the calculation below, $W$ should be thought of as fixed and $\omega$
	should be thought of as a random scenario. So, for example $\P(X(\omega)\leq X(W))=F_X(X(W))$.
	\[
	\begin{split}
	g^X(W)
	&= F_{g}^{-1} (\P(X(\omega) \leq X(W ))) \\
	&= F_{\tilde{g}}^{-1} (\P(X(\omega) \leq X(W ))) \\
	&= \inf \{ x \mid F_{\tilde{g}}(x) \geq \P( X(\omega) \leq X(W )) \} \\
	&= \inf \{ x \mid F_{\tilde{G}X}(x) \geq \P( X(\omega) \leq X(W )) \} \\
	&= \inf \{ x \mid \P(\tilde{G}X(\omega)\leq x) \geq \P( X(\omega) \leq X(W)) \} \\
	&= \inf \{ x \mid \P(X(\omega)\leq G^{-1}x) \geq \P( X(\omega) \leq X(W)) \} 
	\qquad \text{ (by \eqref{eq:generalizedInverseIdentity})} \\
	&= \inf \{ x \mid F_X(G^{-1}x) \geq F_X(X(W)) \} \\
	&= \inf \{ x \mid G^{-1}x \geq X^X(W) \} \\
	&= \inf \{ x \mid X^X(W) \leq G^{-1}x \}
	\qquad \text{ (by \eqref{eq:generalizedInverseIdentity})}
	\\
	&= \inf \{ x \mid \tilde{G}(X^X(W)) \leq x \} \\
	&= \tilde{G} X^X(W) \\
	&= \tilde{g}(W).
	\end{split}
	\]
	Hence $g^X=g$ almost surely.
\end{proof}

\begin{lemma}
	\label{lemma:hardyLittlewood}
	If $f,g \in L^0(\Omega; \R)$ and:
	\begin{enumerate}[(i)]
		\item $f(\omega)\geq k$ for some $k\in \R$;
		\item $g \geq 0$;
		\item $\int_\Omega g \, \ed \P < \infty$ ;
		\item $X$ has a continuous distribution;
	\end{enumerate}
	then
	\[
	\int_\Omega f g \, \ed \P \leq \int_\Omega f^X g^X \, \ed \P \leq \infty.
	\]
\end{lemma}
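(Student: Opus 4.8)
The plan is to reduce to the case of non-negative $f$ and then exploit the fact that $f^X$ and $g^X$ are both \emph{monotone functions of the single random variable $X$}, which is exactly the comonotonicity that drives the Hardy--Littlewood inequality.

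First I would remove the constant $k$. Writing $f = (f-k) + k$ with $f - k \geq 0$, a direct computation with generalized inverses gives $F_{f-k}^{-1} = F_f^{-1} - k$, and hence $(f-k)^X = f^X - k$. Since $g^X$ equals $g$ in distribution (Lemma \ref{lemma:basicPropertiesOfRearrangement}(i), valid because $X$ has a continuous distribution), we have $\int_\Omega g^X \, \ed\P = \int_\Omega g \, \ed\P$, which is finite by hypothesis (iii). Consequently
\[ \int_\Omega f g \, \ed\P = \int_\Omega (f-k) g \, \ed\P + k\int_\Omega g\,\ed\P, \qquad \int_\Omega f^X g^X \, \ed\P = \int_\Omega (f-k)^X g^X \, \ed\P + k\int_\Omega g\,\ed\P, \]
so it suffices to prove the inequality with $f-k$ in place of $f$; that is, I may assume $f \geq 0$ and $g \geq 0$ henceforth. (Because $g \geq 0$ has finite integral and $f \geq k$, the left-hand integral is well defined in $(-\infty,\infty]$, so no integrability pathology arises, and the right-hand side is permitted to equal $+\infty$ as in the statement.)

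Second, for non-negative $f,g$ I would pass to the layer-cake representation: writing $f(\omega) = \int_0^\infty \mathbbm{1}_{\{s < f(\omega)\}}\,\ed s$ and similarly for $g$, multiplying, and applying Tonelli's theorem (legitimate since all integrands are non-negative) yields
\[ \int_\Omega f g\,\ed\P = \int_0^\infty\!\!\int_0^\infty \P(f>s,\, g>t)\,\ed s\,\ed t, \]
and the identical formula for the pair $(f^X,g^X)$. It therefore suffices to establish the pointwise bound $\P(f>s,g>t) \leq \P(f^X>s,\,g^X>t)$ for all $s,t\geq 0$.

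The key step, which I expect to be the crux, is the exact evaluation of the right-hand probability. By definition $f^X = F_f^{-1}\circ F_X \circ X$ and $g^X = F_g^{-1}\circ F_X\circ X$, each of which is a \emph{non-decreasing} function of $X$, being a composition of non-decreasing maps. Hence each of $\{f^X > s\}$ and $\{g^X > t\}$ equals $\{X \in I\}$ for an interval $I$ unbounded above, and any two such intervals are \emph{nested}; the smaller one is their intersection, giving
\[ \P(f^X>s,\,g^X>t) = \min\big(\P(f^X>s),\,\P(g^X>t)\big). \]
Finally, using equidistribution $\P(f^X>s)=\P(f>s)$ and $\P(g^X>t)=\P(g>t)$ (Lemma \ref{lemma:basicPropertiesOfRearrangement}(i)) together with the trivial bound $\P(A\cap B)\leq\min(\P(A),\P(B))$ applied to $A=\{f>s\}$ and $B=\{g>t\}$, we obtain $\P(f>s,g>t)\leq\min(\P(f>s),\P(g>t))=\P(f^X>s,g^X>t)$. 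Integrating this inequality over $(s,t)\in[0,\infty)^2$ and undoing the reduction of the first paragraph completes the argument. The only genuinely substantive point is the comonotonicity identity for the rearranged pair; everything else is the reduction to $f\geq 0$ and a routine Tonelli interchange.
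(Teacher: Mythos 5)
Your proof is correct and takes essentially the same route as the paper: the layer-cake representation, the observation that the superlevel sets of $f^X$ and $g^X$ are nested because both are nondecreasing functions of $X$, equidistribution from Lemma \ref{lemma:basicPropertiesOfRearrangement}(i), and the elementary bound $\P(A\cap B)\leq\min(\P(A),\P(B))$. The only difference is cosmetic: you shift $f$ by $k$ up front via the identity $(f-k)^X = f^X - k$ so that Tonelli applies to nonnegative integrands throughout, whereas the paper keeps $k$ inside the layer-cake representation and justifies the Fubini interchange by first checking that $\int_\Omega (fg)^- \, \ed \P > -\infty$.
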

\begin{proof}
	(Note: this proof is modelled on the proof of the Hardy--Littlewood inequality for ``symmetric decreasing rearrangements''.)
	
	Since $f\geq k$ we have the ``layer-cake'' representation of $f$
	\[
	f(\omega) = k + \int_0^\infty {\mathbbm{1}}_{L(f,x+k)}(\omega) \, \ed x
	\]
	where
	\[
	L(f,t):=\{ \omega \mid f(\omega)>t \}.
	\]
	We also have
	\[
	g(\omega) = \int_0^\infty {\mathbbm{1}}_{L(g,x)}(\omega) \, \ed x.
	\]
	
	We note that for any random variable $h$
	\[
	\begin{split}
	L(h^X,x) &= \{ \omega \mid h^X(\omega) > x \} \\
	&= \{ \omega \mid F^{-1}_h( \P (X \leq X(\omega )) > x \} \\
	&= \{ \omega \mid \P (X \leq X(\omega )) > F_h(x) \}.
	\end{split}
	\]
	Hence for any $h_1$, $h_2$, $x_1$, $x_2$
	either
	\begin{equation}
	L(h_1^X,x_1) \subseteq L(h_2^X,x_2) \quad \text{or} \quad 
	L(h_2^X,x_2) \subseteq L(h_1^X,x_1).
	\label{eq:subsetAlternative}
	\end{equation}
	We also note that
	\[
	\P( L(h,x) ) = \P( h(\omega) > x ) = 1-F_h(x).
	\]
	In particular $\P(L(h,x))$ only depends upon the distribution of $h$
	and hence $\P(L(h^X, x))=\P(L(h,x))$ by Lemma \ref{lemma:basicPropertiesOfRearrangement}.
	
	We now compute: 
	\begin{equation}
	\begin{split}
	\E^\P({\mathbbm 1}_L(f^X,x+k)(\omega) {\mathbbm 1}_L(g^X,y)(\omega))
	&= \P(L(f^X,(x+k)) \cap  L(g^X,y)) \\
	&= \min\{ \P(L(f^X,(x+k))),  \P(L(g^X,y)) \}  \quad \text{by \eqref{eq:subsetAlternative}} \\
	&= \min\{ \P(L(f,(x+k))),  \P(L(g,y)) \} \\
	&\geq \P(L(f,(x+k)) \cap L(g,y)) \\
	&= \E^\P({\mathbbm 1}_L(f,x+k)(\omega) {\mathbbm 1}_L(g,y)(\omega)). \\
	\label{eq:layerCakeInequality}
	\end{split}
	\end{equation}
	
	Using the fact that $f$ is bounded below and $\int_\Omega g \, \ed \P < \infty$ we
	deduce that
	\[
	\int_{\Omega} (f g)^- \, \ed \P > -\infty.
	\]
	By Lemma \ref{lemma:basicPropertiesOfRearrangement}, $f^X$ is also bounded
	below and $g=g^X$ in distribution so $\int_{\Omega} g^X \ed \P < \infty$. Hence
	\[
	\int_{\Omega} (f^X g^X)^- \, \ed \P > -\infty.
	\]
	Therefore we may use the layer-cake
	representations of $f$, $g$, $f^X$ and $g^X$ together with Fubini's theorem
	and \eqref{eq:layerCakeInequality} to compute:
	\[
	\begin{split}
	\int_\Omega f g \, \ed \P &= k \int_{\Omega} g \, \ed \P + \int_{\Omega}\int_\R
	\int_\R {\mathbbm 1}_L(f,x+k)(\omega){\mathbbm 1}_L(g,y)(\omega)
	\ed x \, \ed y \, \ed \P \\
	&= k \int_{\Omega} g \, \ed \P + \int_\R
	\int_\R \int_{\Omega} {\mathbbm 1}_L(f,x+k)(\omega){\mathbbm 1}_L(g,y)(\omega) \, \ed \P \, \ed x \, \ed y \\
	&\leq k \int_{\Omega} g \, \ed \P + \int_\R
	\int_\R \int_{\Omega} {\mathbbm 1}_L(f^X,x+k)(\omega){\mathbbm 1}_L(g^X,y)(\omega) \, \ed \P \, \ed x \, \ed y \\
	&= \int_{\Omega} f^X g^X  \, \ed \P.
	\end{split}
	\]
\end{proof}
\begin{lemma}
	\label{lemma:hardyLittlewoodEnhanced}
	If $f,g \in L^0(\Omega; \R)$ and:
	\begin{enumerate}[(i)]
		\item $\int f g \, \ed \P > -\infty$;
		\item $g \geq 0$;
		\item $\int_\Omega g \, \ed \P$ exists;
		\item $X$ has a continuous distribution;
	\end{enumerate}
	then
	\[
	-\infty < \int_\Omega f g \, \ed \P \leq \int_\Omega f^X g^X \, \ed \P \leq \infty.
	\]
\end{lemma}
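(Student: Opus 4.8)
The plan is to reduce to the basic Hardy--Littlewood inequality of Lemma~\ref{lemma:hardyLittlewood} by truncating $f$ from below and $g$ from above, and then passing to monotone limits. Write $f^+=\max\{f,0\}$ and $f^-=\min\{f,0\}\le 0$, so $f=f^++f^-$ and, since $g\ge 0$, $fg=f^+g+f^-g$ with $f^+g\ge 0$ and $f^-g\le 0$; in this notation hypothesis (i) says exactly that $\int_\Omega f^- g\,\ed\P>-\infty$, so $\int_\Omega fg\,\ed\P$ is well defined in $(-\infty,\infty]$. Because $g\ge 0$, Lemma~\ref{lemma:basicPropertiesOfRearrangement}(i) gives $g^X\ge 0$ almost surely with $\int_\Omega g^X\,\ed\P=\int_\Omega g\,\ed\P\in[0,\infty]$. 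Setting $f_n=\max\{f,-n\}$ and $g_m=\min\{g,m\}$, the truncation $f_n\ge -n$ is bounded below and $\int g_m\,\ed\P\le m<\infty$ since $\P$ is a probability measure, so Lemma~\ref{lemma:hardyLittlewood} applies to $(f_n,g_m)$; using that truncation commutes with rearrangement (Lemma~\ref{lemma:basicPropertiesOfRearrangement}(ii)), i.e.\ $(f_n)^X=\max\{f^X,-n\}$ and $(g_m)^X=\min\{g^X,m\}$, this reads
\[
\int_\Omega f_n\,g_m\,\ed\P\ \le\ \int_\Omega \max\{f^X,-n\}\,\min\{g^X,m\}\,\ed\P .
\]

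The \emph{crux} is that once the hypothesis $\int g<\infty$ is dropped, the negative part of the rearranged integrand can no longer be bounded by a constant multiple of $g^X$, so I must first establish the finiteness
\[
\int_\Omega (f^X)^-\,g^X\,\ed\P\ >\ -\infty .
\]
I would prove this by applying Lemma~\ref{lemma:hardyLittlewood} to the bounded pair $(f_n^-,g_m)$, where $f_n^-=\max\{f^-,-n\}\in[-n,0]$ and $(f^-)^X=(f^X)^-$ by Lemma~\ref{lemma:basicPropertiesOfRearrangement}(ii); this gives $\int f_n^-g_m\,\ed\P\le\int\max\{(f^X)^-,-n\}\min\{g^X,m\}\,\ed\P$. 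Letting $m\to\infty$ (dominated convergence on the left, where $|f_n^-|\le|f^-|$ with $\int|f^-|g\,\ed\P<\infty$ by (i), and monotone convergence applied to the negated nonnegative integrand on the right) yields $\int f^-g\,\ed\P\le\int f_n^-g\,\ed\P\le\int\max\{(f^X)^-,-n\}g^X\,\ed\P$; then letting $n\to\infty$, the right-hand integrals decrease to $\int(f^X)^-g^X\,\ed\P$ while staying bounded below by the finite number $\int f^-g\,\ed\P$, establishing the claim and showing $\int_\Omega f^Xg^X\,\ed\P$ is well defined in $(-\infty,\infty]$.

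With this in hand I would run the double limit on the first displayed inequality. Since truncating below leaves the positive part unchanged, $(f_n)^+=f^+$ and $(\max\{f^X,-n\})^+=(f^X)^+$ for every $n$, so only the negative parts move. Letting $m\to\infty$ at fixed $n$, the positive-part pieces increase to their limits by monotone convergence, while the negative-part pieces are dominated by $|f^-|g$ and $-(f^X)^-g^X$ respectively (integrable by (i) and by the crux step), yielding $\int f_ng\,\ed\P\le\int\max\{f^X,-n\}g^X\,\ed\P$. Letting $n\to\infty$, the positive parts are constant in $n$ and the negative parts converge with the same integrable dominators, so the two sides tend to $\int fg\,\ed\P$ and $\int f^Xg^X\,\ed\P$. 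This gives $\int fg\,\ed\P\le\int f^Xg^X\,\ed\P$, and combined with (i) the full chain $-\infty<\int fg\,\ed\P\le\int f^Xg^X\,\ed\P\le\infty$.

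The single genuine obstacle is the crux finiteness step: with $\int g$ possibly infinite, the negative part $(f^X)^-g^X$ must be shown integrable by a separate Hardy--Littlewood comparison rather than by the crude bound $|(f^X)^-|\le n$, and one must be careful that the inequality from Lemma~\ref{lemma:hardyLittlewood} points in the favourable direction (it makes the rearranged negative integral \emph{less} negative). Every other step is routine monotone/dominated convergence bookkeeping, organized so that on each side only the negative parts ever require a dominator.
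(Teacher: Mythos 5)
Your proof is correct, and its engine is the same as the paper's: reduce to Lemma~\ref{lemma:hardyLittlewood} by truncation, use the fact that truncation commutes with $X$-rearrangement (Lemma~\ref{lemma:basicPropertiesOfRearrangement}(ii)), and pass to monotone limits. The structural difference is in what gets truncated. The paper splits $f=f^++f^-$, applies Lemma~\ref{lemma:hardyLittlewood} directly to $f^+$ and to the lower truncations $f^-_k=\max\{f^-,k\}$ with $g$ left untouched, lets $k\to-\infty$ by monotone convergence, and recombines via $f^X=(f^+)^X+(f^-)^X$ (Lemma~\ref{lemma:basicPropertiesOfRearrangement}(iii)). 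That route is shorter, but it invokes the basic lemma with $g$ itself and therefore tacitly uses $\int_\Omega g\,\ed\P<\infty$. You instead truncate $g$ from above as well, at the cost of your separate ``crux'' step establishing $\int_\Omega (f^X)^-g^X\,\ed\P>-\infty$ (needed as a dominator before dominated convergence can be run on the rearranged side); in exchange your argument remains valid when $\int_\Omega g\,\ed\P=+\infty$ --- a case that hypothesis (iii) of the lemma literally permits, since the integral of a nonnegative function always exists in $[0,\infty]$, but which the paper's own proof does not cover. So your version is marginally longer but strictly more general, and arguably more faithful to the stated hypotheses; in the paper's application, where $g=\frac{\ed \Q}{\ed \P}$ has integral $1$, the distinction is immaterial. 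Your limit bookkeeping (monotone convergence on the positive parts, dominated convergence on the negative parts with dominators $|f^-|g$ and $|(f^X)^-|g^X$, both integrable by hypothesis (i) and by the crux step respectively) is all in order.
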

\begin{proof}
	In this proof, given a real $k$ and random variable $f$, we will write $f_k$ as an abbreviation for the random variable $\max\{f(\omega),k\}$. Lemma \ref{lemma:basicPropertiesOfRearrangement} tells us $(f_k)^X=(f^X)_k$ so we may write $f_k^X$ without ambiguity.
	
	We know $\int (f g)^- \ed \P> -\infty$. Since $g \geq 0$, $(f g)^-=f^- g$, hence $\int f^- g \, \ed \P > -\infty$. Also since $g \geq 0$ we have for any $k \in \R$
	\[
	-\infty < \int_\Omega f^- g \, \ed \P \leq \int_\Omega f^-_k g  \, \ed \P
	\]
	By Lemma \ref{lemma:hardyLittlewood} we then have
	\[
	-\infty < \int_\Omega f^- g \, \ed \P \leq \int_\Omega f^-_k g  \, \ed \P \leq \int_\Omega (f^-)_k^X g^X \, \ed \P \ \ \mbox{for all} \ k.
	\]
	
	As $k\to -\infty$, $f^-_k(\omega) \downarrow f^-(\omega)$ and 
	$(f^-)^X_k(\omega) \downarrow (f^-)^X(\omega)$ for all $\omega$.
	So by the Montone Convergence Theorem
	\[
	-\infty < \int_\Omega f^- g \, \ed \P \leq \int_\Omega (f^-)^X g^X \, \ed \P.
	\]
	Lemma \ref{lemma:hardyLittlewood} also tells us that
	\[
	0 \leq \int_\Omega f^+ g \, \ed \P \leq \int_\Omega (f^+)^X g^X \, \ed \P. 
	\]
	Hence
	\[
	\begin{split}
	-\infty < \int_\Omega f g \, \ed \P
	&= \int_\Omega (f^+ + f^- )g \, \ed \P \\
	&\leq \int_\Omega ((f^+)^X + (f^-)^X) g^X \, \ed \P \\
	&\leq \int_\Omega (f^+ + f^-)^X g^X \, \ed \P \\
	&= \int_\Omega f^X g^X \ed P \leq \infty.
	\end{split}
	\]
\end{proof}

\begin{lemma}
	\label{lemma:sierpinski}
	If $(\Omega, {\cal F}, \P)$ is non-atomic and $Q \in L^0(\Omega; \R)$ is a random variable on $\Omega$, then there exists a
	uniform random variable
	$X \in L^0(\Omega; \R)$ such that
	$Q(\omega)=F^{-1}_Q(X(\omega))$.
\end{lemma}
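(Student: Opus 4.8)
The plan is to build $X$ by the generalized distributional transform, spreading the mass of $\P$ over the jumps of $F_Q$ using auxiliary randomization supplied by the non-atomicity of $\Omega$. The single essential ingredient is Sierpi\'nski's theorem: the range of a non-atomic finite measure is an interval, so for any measurable $A$ with $\P(A)=a$ and any $0\le b\le a$ there is a measurable $B\subseteq A$ with $\P(B)=b$. From this I would first derive the standard fact that every non-atomic probability space carries a random variable uniformly distributed on $[0,1]$: splitting $\Omega$ repeatedly in half by Sierpi\'nski's theorem produces a nested dyadic family $\{\Omega^n_k\}$ with $\P(\Omega^n_k)=2^{-n}$, and the step functions $X_n=\sum_k 2^{-n}k\,\mathbbm 1_{\Omega^n_k}$ increase to a limit $X$ with $\P(X\le t)=t$ for every dyadic $t$, hence for all $t$.

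Next I would set up the construction for the given $Q$. Let $D=\{x:\P(Q=x)>0\}$ be the (countable) set of atoms of the law of $Q$, write $p_i=\P(Q=x_i)$, and put $a_i=F_Q(x_i^-)$ and $b_i=a_i+p_i=F_Q(x_i)$, so the intervals $(a_i,b_i]$ are disjoint in $[0,1]$. Decompose $\Omega$ into the continuous part $\Omega_c=\{Q\notin D\}$ and the level sets $\Omega_i=\{Q=x_i\}$. Each $\Omega_i$ has positive measure inside a non-atomic space and is therefore itself non-atomic under the normalized measure $\P(\cdot)/p_i$; so by the fact just established there is, on $\Omega_i$, a random variable $X$ uniformly distributed on $(a_i,b_i]$. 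On $\Omega_c$ I would simply set $X=F_Q(Q)$. Equivalently, this is the generalized distributional transform $X=F_Q(Q^-)+V\,(F_Q(Q)-F_Q(Q^-))$, where $V$ is the uniform randomizer supplied on each atom.

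Then I would verify the two required properties. For $X\sim U[0,1]$: on each $\Omega_i$ the push-forward of $\P$ under $X$ is Lebesgue measure on $(a_i,b_i]$ by construction, while on $\Omega_c$ the law of $F_Q(Q)$ accounts for Lebesgue measure on the complementary set $[0,1]\setminus\bigcup_i(a_i,b_i]$; adding these gives $\P(X\le t)=t$, a routine distributional computation. For $F_Q^{-1}(X)=Q$ almost surely: on $\Omega_i$ one has $F_Q^{-1}(p)=x_i$ for every $p\in(a_i,b_i]$ directly from $F_Q^{-1}(p)=\inf\{x:F_Q(x)\ge p\}$; on $\Omega_c$ one uses $F_Q^{-1}(F_Q(Q))=Q$, which holds off a null set because the only points $q$ with $F_Q^{-1}(F_Q(q))<q$ lie in the interiors of the (countably many) maximal intervals on which $F_Q$ is constant, and each such interval carries zero $Q$-mass.

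I expect the main obstacle to be the first step, producing a uniform random variable on an abstract non-atomic space, since this is where Sierpi\'nski's theorem must be invoked and where the measurability and convergence of the dyadic construction must be handled with care. The verification that the continuous part contributes exactly the missing Lebesgue mass is the other point needing attention, but it reduces to a direct computation of $\P(X\le t)$ once the decomposition into $\Omega_c$ and the $\Omega_i$ is in place.
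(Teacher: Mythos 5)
Your proposal is correct and follows essentially the same route as the paper: both invoke Sierpi\'nski's theorem to build a uniform random variable on any non-atomic piece via dyadic splitting, then define $X$ as $F_Q(Q)$ where $F_Q$ is continuous at $Q(\omega)$ and as a uniform randomization over the jump interval $(F_Q(x_i^-),F_Q(x_i)]$ on each atom $\{Q=x_i\}$ --- i.e.\ the generalized distributional transform. The only difference is bookkeeping in the uniformity check (you sum the push-forwards over the atomic and continuous parts, while the paper fixes $p$ and partitions $\Omega$ into three events $A$, $B$, $C$), which does not change the substance of the argument.
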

\begin{proof}
	Sierpi\'nski's theorem on non-atomic measures tells us that for all $0 \leq \alpha \leq 1$ there is a measurable set $E \subseteq \Omega$ of measure $\alpha$ \cite{sierpinski}.
	
	One deduces that there is a uniformly distributed random variable $U$ on $\Omega$. To see this, first partition $\Omega$ into two subsets of measure $\frac{1}{2}$, we will call this partition Level $1$. We define a random variable $X_1$ which is equal to $\frac{1}{2}$ on the first subset of Level $1$ and equal to $1$ on the second subset of Level $1$. Inductively we partition each subset of Level $n$ into two equally sized subsets and define $X_n$ by $X_{n-1}-\frac{1}{2^n}$ on the first subsets and $X_{n-1}$ on the second subsets. The distribution function of $X_n$ will be a step function from $0$ to $1$ with $2^n$ uniform steps. By construction, $X^n(\omega)$ is decreasing for each $\omega \in \Omega$.
	Hence $X^n$ converges pointwise, hence almost surely, hence in distribution to some random variable $X$. Thus $X$ must be uniformly distributed.
	
	At each point $x \in \R$ where there is a discontinuity of $F_Q$ consider the set $\Omega_x=F_Q^{-1}(x)$. This set has non-zero measure $F_Q(x)-F_Q^-(x)$ where $F_Q^-(x)$ is the left limit of the distribution function at $x$. Note that $F_Q(x)-F_Q^-(x)=\P(Q=x)$. Hence by the above we can find a measurable function $U_x$ on $\Omega_x$ taking values uniformly between $0$ and $1$. Define
	a random variable $X$  by
	\[
	X(\omega)=\begin{cases}
	F_Q(Q(\omega)) & \text{if $F_Q$ is continuous at $Q(\omega)$} \\
	F_Q(x) - U_x(\omega) \P(Q=x) & \text{if $F_Q$ is discontinuous at $x=Q(\omega)$}. \\	
	\end{cases}
	\]	
    Clearly $Q(\omega)=F^{-1}_Q(X(\omega))$.  We must show that
    $X$ is uniformly distributed, i.e.\  that $P(X \leq p)=p$ for all $p\in[0,1]$.
    
    Given $p \in [0,1]$ define $p^- = \sup( \hbox{Im} F_Q \cap (-\infty,p] )$
    and $p^+ = \inf( \hbox{Im} F_Q \cap [p,\infty) )$. We partition $\Omega$ into three sets $A$, $B$ and $C$ defined by
    \[
    \begin{split}
    A &= \{ \omega \mid F_Q(Q(\omega)) \leq p^- \} \\
    B &= \{ \omega \mid p^- < F_Q(Q(\omega)) \leq p^+ \} \\    
    C &= \{ \omega \mid p^+ < F_Q(Q(\omega)) \}. \\        
    \end{split}
    \]
    For all $\omega$, $X(\omega)\leq F_Q(Q(\omega))$. So if $\omega \in A$, 
    $X(\omega) \leq F_Q(Q(\omega)) \leq p^- \leq p$. Hence
    \begin{equation}
    \P( X \leq p \mid A ) = 1.
	\label{eq:conditionalProbA}
    \end{equation}
    If $\omega \in B$ then $X(\omega)=p^+ - U_x(\omega)(p^+ - p^-)$. So 
    \begin{equation}
    \P(  X \leq p \mid B ) = \frac{p - p^-}{p^+-p^-}.
	\label{eq:conditionalProbB}
    \end{equation}
    
    For all $\omega$, $X(\omega)\geq F_Q^-(Q(\omega))$.
    Since $F_Q^-$ is a left limit, $F_Q^-(Q(\omega))\geq F_Q(x)$ if $x\leq Q(\omega)$.  If $\omega \in C$ then $F_Q^{-1}(p^+) < Q(\omega)$. We deduce that if $\omega \in C$ then
    \begin{equation*}
    X(\omega) \geq F_Q^-(Q(\omega)) \geq F_Q(F_Q^{-1} (p^+)) = p^+ \geq p.
    \end{equation*}
    We deduce that
    \begin{equation*}
    \P( X \leq p \mid C ) = \P( X = p \mid C ).
    \end{equation*}
    and moreover this probablity is equal to $0$ unless $p^+=p$. We may assume that each $U_x$ takes values in $(0,1)$ so that $X(\omega)$ never equals $p^+$ unless we have that $p^+=p^-$ and $F$ is continuous at $x=F^{-1}p^+$. But in this case we find that $\P(x=p^+)=\P(Q=x)=\P(Q\leq x) - \P(Q <x ) = F_Q(x) - F^-_Q(x)=0$. So $\P(x=p^+)=0$ and hence
    \begin{equation}
	\P(  X(\omega ) \leq p \mid C ) = 0.
	\label{eq:conditionalProbC}
	\end{equation}
	
	Since $p^{\pm} \in \hbox{Im} F_Q$ we compute that
	\[
\P(F_Q(Q(\omega)) \leq p^{\pm} ) = 
\P(Q(\omega) \leq F^{-1}_Q (p^{\pm}) ) =
F_Q( F^{-1}_Q (p^{\pm}) )) = p^{\pm}.
\]	
	It follows that
	\begin{equation}
	\P(A) = p^-, \  P(A \cup B ) = p^+, \text{ hence } P(B)=p^+ - p^-.
	\label{eq:sizeOfSets}
	\end{equation}
	Since $A$, $B$ and $C$ give a partition of $\Omega$ we may combine
	equations \eqref{eq:conditionalProbA}, \eqref{eq:conditionalProbB},
	\eqref{eq:conditionalProbC} and \eqref{eq:sizeOfSets} to obtain
	\[
	\P(X(\omega)\leq p ) = p.
	\]
	So $X$ is uniformly distributed as claimed.
\end{proof}

\begin{proof}[Proof of Theorem \ref{theorem:rearrangement}]
	Take $Q$ to be $\frac{\ed \Q}{\ed \P}$ in Lemma \ref{lemma:sierpinski} to find $X$ uniformly distributed with $\frac{\ed \Q}{\ed \P}=F^{-1}_\frac{\ed \Q}{\ed \P} V$ almost surely.
	
	Suppose $f$ satisfies the constraints of \eqref{optimizationProblem}. We see that
	$-(-f)^X$ is equal to $f$ in distribution, hence $-(-f)^X \in \cal A$ if $f \in \cal A$.
	Furthermore
	\[
	\begin{split}
	-e^{rT} C &\leq \int_\Omega (-f) \frac{\ed \Q}{\ed \P} \, \ed \P \quad \text{ by \eqref{optimizationProblem},} \\
	&\leq \int_\Omega (-f)^X \left(\frac{\ed \Q}{\ed \P}\right)^X \, \ed \P \quad \text{ by Lemma \ref{lemma:hardyLittlewoodEnhanced},} \\
	&= \int_\Omega (-f)^X \frac{\ed \Q}{\ed \P} \, \ed \P \leq \infty \quad \text{ by Lemma
		\ref{lemma:basicPropertiesOfRearrangement}.}
	\end{split}
	\]
	So $-(-f)^X$ satisfies the constraints of \eqref{optimizationProblem}.
	
	Finally we note that
	\[
	-(-f)^X (\omega) = -F^{-1}_{-f} F_X X (\omega) = -F^{-1}_{-f} X (\omega)
	= (1-F_{f})^{-1} X (\omega).
	\]
	
	The result now follows by taking $U=1-X$.
\end{proof}

\end{document}